\providecommand{\ignore}[1]{}
\newif\ifcmnt
\providecommand{\aucmnt}[1]{#1}
\providecommand{\aucmnt}[1]{}
\newtheorem{thm}{Theorem}[section]
\newtheorem{prop}[thm]{Proposition}
\newtheorem{lem}[thm]{Lemma}
\newtheorem{conj}[thm]{Conjecture}
\newtheorem{cor}[thm]{Corollary}
\theoremstyle{definition}
\newtheorem{definition}[thm]{Definition}
\newtheorem{remark}[thm]{Remark}
\newtheorem{example}[thm]{Example}
\numberwithin{equation}{section}
\renewcommand\bra[1]{{\langle{#1}|}}
\renewcommand\ket[1]{{|{#1}\rangle}}
\newcommand{\id}{\text{id}}
\newcommand{\cB}{\mathcal{B}}
\newcommand{\cK}{\mathcal{K}}
\numberwithin{equation}{section}
\newtheorem*{theorem*}{Theorem}
\newtheorem*{proposition*}{Proposition}
\newtheorem*{lemma*}{Lemma}
\newcommand{\one}{\mathds{1}}
\begin{document}
\title{Towards a resolution of the spin alignment problem}

\author{Mohammad A. Alhejji\footnote{mohammad.alhejji@colorado.edu}}
\affiliation{National Institute of Standards and Technology, Boulder, Colorado 80305, USA}
\affiliation{JILA, University of Colorado, 440 UCB, Boulder, CO 80309, USA}
\affiliation{Department of Physics, University of Colorado, Boulder, Colorado, USA}
\author{Emanuel Knill}
\affiliation{National Institute of Standards and Technology, Boulder, Colorado 80305, USA}
\affiliation{Center for Theory of Quantum Matter, University of Colorado, Boulder, Colorado 80309, USA}

\begin{abstract}
  Consider minimizing the entropy of a mixture of
  states by choosing each state subject to constraints. If the
  spectrum of each state is fixed, we expect that in order to reduce the
  entropy of the mixture, we should make the states less
  distinguishable in some sense.  Here, we study a class of
  optimization problems that are inspired by
  this situation and shed light on the relevant notions of
  distinguishability. The motivation for our study is the recently introduced 
  spin alignment conjecture. In the original version of the underlying problem,
  each state in the mixture is constrained to be a freely chosen
  state on a subset of $n$ qubits tensored with a fixed state $Q$ on each of the qubits in the complement. According to the conjecture, the entropy of the mixture is minimized by choosing the freely chosen state in each term to be a tensor product of projectors onto a fixed maximal eigenvector of $Q$, which maximally ``aligns'' the terms in the mixture. We generalize this problem in several ways.  First, instead of minimizing entropy, we consider maximizing arbitrary unitarily invariant convex functions such as Fan norms and Schatten norms. To formalize and generalize the conjectured required alignment, we
  define \textit{alignment} as a preorder on tuples of self-adjoint operators that is induced by majorization. We prove the generalized conjecture for Schatten
  norms of integer order, for the case where the freely chosen
  states are constrained to be classical, and for the case where only
  two states contribute to the mixture and $Q$ is
  proportional to a projector.  The last case fits into a more
  general situation where we give explicit conditions for maximal 
  alignment. The spin alignment problem has a natural ``dual"
  formulation, versions of which have further generalizations that we
  introduce. 
\end{abstract}
\maketitle

\tableofcontents

\section{Introduction}
\label{sec: intro}

Minimizing dispersion in a communication
signal is often necessary in both the theory and practice of
information processing. Informally, the term dispersion is used here so that the more dispersion in a signal the more information content it carries.  
How dispersion is quantified depends on context. For example, in a setting where sources are taken to generate signals in an independent
and identically-distributed manner and processes are assumed to be
memoryless, a useful measure of dispersion is the Shannon entropy for classical signals associated with probability distributions
\cite{Shannon1948}, or analogously the von Neumann entropy for quantum
signals associated with quantum states
\cite{Schumacher1995}. By this point, myriads of measures of dispersion have
been introduced and extensively studied
\cite{Renyi1960, Tsallis1988, Chehade2019}. A sensible requirement for a
measure of dispersion is that it does not decrease under any process
where labels designating outcomes are confused. Such a
process is represented classically by a doubly-stochastic transition matrix; the
quantum generalization of which is a mixed-unitary channel. Put in other words, we require measures of
dispersion to be \textit{Schur-concave}.  In this work, we switch from concave to convex and primarily focus on convex Schur-convex functions. To be precise, we study how unitarily invariant convex functions behave under mixing of
signals. The value of these functions at a mixture of signals depends
on the spectra of the individual signals supported in the mixture as
well as the alignment or overlap of the signals. It is our aim to shed
light on the latter kind of dependence.

Maximizing unitarily invariant convex functions is less ubiquitous in quantum information than minimizing them, and is generally a more difficult task. This difficulty merits more analysis and research. Quantum channel minimum output entropy is an information-theoretic quantity that is defined by a minimization of a Schur-concave function. A reason it is of interest is the equivalence of the additivity (under tensor multiplication) conjectures of the minimum output entropy and the Holevo information \cite{Shor2004}. The latter quantity is relevant because its regularization is a formula for the capacity for classical communication \cite{Schumacher1997}. Despite an existence proof due to Hastings \cite{Hastings2009}, an explicit example of a quantum channel with a finite-dimensional domain and a strictly super-additive minimum output entropy is yet to be found. In addition, such minimizations are relevant in broadcast channel scenarios with privacy concerns. Ideally, the entropy of the environment or the adversary is kept at a minimum. Depending on the channels in question, the optimal states may be mixtures, as in the case of platypus channels \cite{Leditzky2022a}.

We are interested in situations where we can unambiguously say that a signal $a$ contains less dispersion than a signal $b$. That is, $f(b) \leq f(a)$ for all Schur-convex functions $f$. Luckily, we need not argue such an inequality for every such \(f\). It suffices to check that $a$ majorizes $b$. In this sense, of all the ways of measuring dispersion, majorization is the most stringent \cite{Marshall2011, Bhatia1997}. Majorization theory, which is nearly a century old \cite{Hardy1929, hardylittlewood_1934}, is the basis of Nielsen's celebrated theorem for pure bipartite state conversion using local operations and classical communication (LOCC) \cite{Nielsen1999}. This is one example among many of majorization giving insights into information processing.

Here, we elucidate what we mean by alignment and how it is relevant to the behavior of dispersion under mixing. For a self-adjoint operator \(S\), let $\lambda(S)$ denote the tuple whose entries are the eigenvalues of \(S\) (with multiplicity) ordered non-increasingly. 
\begin{definition}
\label{def: perfect alignment} (perfect alignment)
    Let $S_{1},\ldots, S_{\ell}$ be self-adjoint operators on $\mathbb{C}^{d}$. They are said to be \textit{perfectly aligned} if there exists an ordered orthonormal basis $\{ \ket{\phi_{i}} \}_{i=1}^{d}$ such that $S_{1} = \sum_{i =1}^{d} \lambda_{i} (S_{1}) \ket{\phi_{i}}\bra{\phi_{i}}, \: \ldots \: , S_{\ell} = \sum_{i =1}^{d} \lambda_{i} (S_{\ell}) \ket{\phi_{i}}\bra{\phi_{i}}$.
\end{definition}
As an example, we note that self-adjoint projection operators $P_1, \ldots, P_\ell$ are perfectly aligned if and only if their supports are nested. This definition is motivated by Fan's majorization relation \cite{Fan1949}, which in the case of states $\rho_{1},\ldots, \rho_{\ell}$ and a probability measure $p = (p_{i})_{i=1}^{\ell}$ reads
\begin{align} 
\label{Fan's majorization relation}
\lambda(\sum_{i=1}^{\ell} p_{i} \rho_{i}) \preceq \sum_{i=1}^{\ell} p_{i} \lambda (\rho_{i}).
\end{align}
Here \(b\preceq a\) means \(b\) is majorized by \(a\). These inequalities are simultaneously saturated if and only if the states are perfectly aligned. Hence, to minimize the dispersion in a mixture of states subject to spectral constraints, we should choose the states to be perfectly aligned.

More generally, we may consider a set \(G\) whose elements are \(\ell\)-tuples of states
\( (\rho_{i})_{i=1}^{\ell}\) that satisfy constraints including that for each \(i \in [\ell]\), \(\lambda(\rho_i)\) is given.  We may then ask if there is a tuple \( (\tilde{\rho}_{i})_{i=1}^{\ell} \in G\) such that for all probability measures
  \((p_{i})_{i=1}^{\ell}\),
  \(\lambda(\sum_{i=1}^{\ell} p_{i} \tilde{\rho}_{i})\) majorizes
  \(\lambda(\sum_{i=1}^{\ell} p_{i} {\sigma}_{i})\) for all \((\sigma_{i})_{i=1}^{\ell} \in G\). If \(G\) contains a tuple of perfectly aligned states, the answer is yes 
by way of the relation in Eq.~\ref{Fan's majorization relation}. The following is an example of a situation where constraints preclude the existence of such a tuple. 

\begin{example}
  \label{ex: toy example}
    Let $\tau$ be a state acting on $\mathbb{C}^d$ of rank strictly less than $d$. Let $\ket{\alpha}$ be
    a unit vector in $\text{supp}(\tau) := \text{ker}(\tau)^{\perp}$ and let $\ket{e}$ be a unit
    vector in $\text{ker}(\tau)$. Consider the set of pairs of states
    of the form $(\tau,\ketbra{v_{\gamma}})$,
    where $\ket{v_{\gamma}} := \sqrt{\gamma} \ket{\alpha} + \sqrt{1 -
      \gamma} \ket{e}$ for some $\gamma\in[0,1]$.
    If $\ket{\alpha}$ is a maximal eigenvector of $\tau$, then $\tau$ is perfectly aligned with $\ketbra{v_{1}}$. Hence, we can see via Eq.~\ref{Fan's majorization relation} that
    for all $p \in [0,1]$, 
    $p\tau+(1-p)\ketbra{v_{1}}$ majorizes $p\tau+(1-p)\ketbra{v_{\gamma}}$
    for all $\gamma\in[0,1]$.
     If $\ket{\alpha}$ is not a maximal eigenvector of $\tau$, then
    $\tau$ is not perfectly aligned with $\ketbra{v_{\gamma}}$ for any $\gamma\in[0,1]$.
    However, it is natural to conjecture that the choice $\gamma = 1$ minimizes dispersion in the mixture anyway,
    as that is where the two states are as overlapped or aligned as possible.
    We give a proof of this conjecture in App.~\ref{sec:app C}.
\end{example}

We use the spin alignment problem as a case study of these notions of alignment with the aim of refining Fan's majorization relation. Introduced by Leditzky et al. in Ref.~\cite{Leditzky2022a}, the problem arose in the context of proving additivity of the coherent information of a class of quantum channels with peculiar communication properties. For a qudit state $Q$ and a probability measure $\mu$, they considered $n$-qudit states of the form
\begin{align} 
\label{intro alignment operator}
    \sum_{I \subseteq [n]} \mu_{I} \rho_{I} \otimes Q^{\otimes I^{c}}.
\end{align}
Each term in the mixture corresponds to a subset \(I \subseteq [n]\) and is a tensor product of a state $\rho_{I}$ of the qudits in $I$ and a fixed state $Q$ on each of the qudits in the complement \(I^c\). Their task was to pick the tuple of states $(\rho_{I})_{I \subseteq [n]}$ so that the overall operator in Eq.~\ref{intro alignment operator} has minimum von Neumann entropy. It is not too difficult to see that an optimal choice has to be one where states in the tuple are pure (see Lem.~\ref{lem: pure states sufficiency} below). However, depending on $\mu$, there may be no such choice where the summands in Eq.~\ref{intro alignment operator} are perfectly aligned. Leditzky et al. conjectured that the von Neumann entropy is minimized by choosing the tuple of states to be $(\ket{q_{1}}\bra{q_{1}}^{\otimes I})_{I \subseteq [n]}$, where $\ket{q_{1}}$ is a maximal eigenvector of $Q$. With this tuple, the summands appear to overlap maximally. They showed that if this conjecture is true, then their channels have additive coherent information. For more information on these channels, see \cite{Leditzky2022a} and the companion paper \cite{Leditzky2022b}.

The structure of this paper is as follows. In Sec.~\ref{sec:prelim}, we introduce notation and the relevant mathematical background. We formally define and motivate the class of spin alignment problems in Sec.~\ref{sec: spin alignment def}. In Sec.~\ref{sec:results}, we provide our results, which include resolutions for various instances of spin alignment. In Sec.~\ref{sec:``dual" problem}, we introduce a general class of optimization problems that is ``dual" to spin alignment. In Sec.~\ref{sec:conclusion}, we conclude and discuss open problems and follow-up lines of inquiry. Appendices App.~\ref{sec:app A} and App.~\ref{sec:app B} contain proofs of technical lemmas. Appendix App.~\ref{sec:app C} contains a resolution of the problem stated in Ex.~\ref{ex: toy example}.

\section{Mathematical preliminaries}
\label{sec:prelim}

In this work, we are concerned with finite-dimensional inner product spaces over $\mathbb{C}$. We denote such spaces by $\mathcal{H},\mathcal{K}$, etc. The dimension of $\mathcal{H}$ is denoted by $d_{\mathcal{H}}$. The set of operators on \(\mathcal{H}\) is denoted by \(\mathcal{B}(\mathcal{H})\). $\mathcal{S}(\mathcal{H}) \subset \mathcal{B}(\mathcal{H})$ denotes the real vector space of self-adjoint operators on $\mathcal{H}$. $\mathcal{P} (\mathcal{H}) \subset \mathcal{S}(\mathcal{H})$ denotes the convex cone of positive semi-definite operators, and $\mathcal{D} (\mathcal{H}) \subset \mathcal{P} (\mathcal{H})$ denotes the convex set of states, that is the set of positive semi-definite operators with unit trace. For $R \in \mathcal{B} (\mathcal{H})$, $\sigma(R) \in \mathbb{R}^{d_{\mathcal{H}}}$ is the vector whose entries are the singular values of $R$ ordered non-increasingly. Similarly, for $S \in \mathcal{S} (\mathcal{H})$, $\lambda(S) \in \mathbb{R}^{d_{\mathcal{H}}}$ is the vector whose entries are the eigenvalues of $S$ ordered non-increasingly. We use the word projector to refer to a self-adjoint projection operator and the word qudit to refer to a physical system that is modeled on a complex inner product space of dimension $d$. For $n \in \mathbb{N}$, the set {\(\{1,\ldots, n\)\}} is denoted by $[n]$ and its power set is denoted by $2^{[n]}$. Given a measure $\mu$ on a countable set, its support is denoted by $\text{supp}(\mu)$. If $x \in \mathbb{R}^{d}$, then $x^{\downarrow}$ and $x^{\uparrow}$ denote the vectors in \(\mathbb{R}^{d}\) with the entries of $x$ ordered non-increasingly and non-decreasingly, respectively. Given $\mathcal{H}$, we distinguish an ordered orthonormal basis $\{ \ket{ q_{i} } \}_{i=1}^{ d_{\mathcal{H}}}$. For $S \in \mathcal{S} (\mathcal{H})$, $S^{\downarrow} := \sum_{i = 1}^{d_{\mathcal{H}}} \lambda_{i} (S) \ket{q_{i}}\bra{q_{i}}$. Real vectors $v_{1}, \ldots, v_{s}$ in $\mathbb{R}^d$ are said to be similarly ordered if there exists a permutation matrix $A: \mathbb{R}^d \rightarrow \mathbb{R}^d$ such that $v_{1}^{\downarrow} = A v_{1}, \ldots, v_{s}^{\downarrow} = A v_{s}$. Similar ordering for real vectors is analogous to perfect alignment for self-adjoint operators. Observe that for an arbitrary vector $u \in \mathbb{R}^d$ and $v \in \text{span} ( (\underbrace{1, \ldots, 1)}_{d \text{ times}})$, the two are similarly ordered. Analogously, an arbitrary self-adjoint operator and an arbitrary operator in the real span of the identity operator are perfectly aligned. This observation shows that neither similar ordering nor perfect alignment is transitive when the dimension of the underlying space is greater than $1$.

Here is a brief review of majorization theory. Given $x, y \in \mathbb{R}^{d}$, $x$ is said to majorize $y$ if
\begin{align} 
\label{eq: majorization cond}
    \sum_{i = 1}^{k} x_{i}^{\downarrow} \geq \sum_{i = 1}^{k} y_{i}^{\downarrow}
\end{align}
for each $k \in [d -1]$ and $\sum_{i = 1}^{d} x_{i}^{\downarrow} = \sum_{i = 1}^{d} y_{i}^{\downarrow}$. This statement is denoted with $x \succeq y$. Majorization is defined on $\mathcal{S}(\mathcal{H})$ spectrally. For $B, C \in \mathcal{S}(\mathcal{H})$, if $\lambda(B) \succeq \lambda(C)$, then $B$ is said to majorize $C$, and we denote that with $B \succeq C$. Majorization induces a preorder on $\mathbb{R}^{d}$ and $\mathcal{S}(\mathcal{H})$, respectively. Given a set, a majorant, if it exists, is an element of the set that majorizes all elements in the set.

There are several useful characterizations of majorization, two of which are relevant to this work. The first is based on doubly-stochastic processing (see Theorem II.1.10 on page 33 in Ref.~\cite{Bhatia1997}): $x$ majorizes $y$ if and only if there exist a tuple of permutation matrices $(P_{i})_{i}$ and a probability measure $(p_{i})_{i}$ such that 
\begin{align} 
\label{eq: cl stoch majorization}
 y = \sum_{i} p_{i} P_{i} x. 
\end{align}
In words, $x$ majorizes $y$ if and only if $y$ belongs to the convex hull of the orbit of $x$ under the action of the symmetric group. That is, $y = A x$ for some doubly-stochastic matrix $A$. Analogously, $B \succeq C$ if and only if
\begin{align}
\label{eq: q stoch majorization}
C = \sum_{i} p_{i} U_{i} B U_{i}^{*}
\end{align}
for a tuple of unitary operators $(U_{i})_{i}$ and a probability measure $(p_{i})_{i}$. Put differently, $C$ is the value of a mixed-unitary quantum channel at $B$.

The second characterization is based on unjust transfers, which are related to the better-known $T$-transforms \cite{Marshall2011}. For distinct $i,j \in [d]$, and $t \in [0,1]$, a $T$-transform (also known as a Robin Hood operation \cite{Arnold2011}) is a linear map with action
 \begin{align}
\label{alg: T-transforms}
(x_{1}, \ldots, x_{i}, \ldots, x_{j}, \ldots, x_{d}) \mapsto (x_{1}, \ldots, (1-t) x_{i} + t x_{j}, \ldots , (1-t) x_{j} + t x_{i}, \ldots, x_{d}).
 \end{align}
 A $T$-transform is a convex combination of the identity and a transposition. The statement $x \succeq y$ holds if and only if $y$ can be obtained from $x$ by a finite number of $T$-transforms (see Lemma B.1 on page 32 in Ref.~\cite{Marshall2011}). A direct consequence of this is that $x \succeq y$ implies $x \oplus z \succeq y \oplus z$ for all $x, y \in \mathbb{R}^{d_{1}}$ and $z \in \mathbb{R}^{d_{2}}$. A transfer on \(\mathbb{R}^d\) is an affine transformation determined by a pair
  of distinct indices \(i,  j \in [d]\) and an amount \(\varepsilon \geq 0\). It has the following action
\begin{align}
\label{alg: transfers}
(x_{1}, \ldots, x_{i}, \ldots, x_{j}, \ldots, x_{d}) \mapsto (x_{1}, \ldots,  x_{i} + \varepsilon, \ldots , x_{j} - \varepsilon, \ldots, x_{d}).
 \end{align}
Here, \(i\) is the receiving entry of the transfer and \(j\) is the giving entry. If \(x_i \geq x_j\), then when applied to \(x\), the transfer is said to be \textit{unjust}.

 \begin{prop}
 \label{prop: T-transforms and unjust transfers are equivalent}
 Let $x, y \in \mathbb{R}^{d}$. Then, $y = T x$ for some $T$-transform $T$ if and only if $x^\downarrow$ can be obtained from $y^\downarrow$ with an unjust transfer. 
\end{prop}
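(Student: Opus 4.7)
The plan is to exploit the fact that both $T$-transforms and transfers modify at most two entries of their input, so the proposition asserts a correspondence between two essentially two-entry operations, linked by the formula $\varepsilon = t\,(x_i - x_j)$ (assuming $x_i \geq x_j$), where $t \in [0,1]$ is the $T$-transform parameter and $\varepsilon \geq 0$ is the unjust transfer amount. I would prove each direction by an explicit construction matching these parameters.

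For the forward direction, suppose $y = Tx$ with $T$ a $T$-transform at $(i, j)$ and parameter $t$. Using the symmetry $(t, i, j) \leftrightarrow (1-t, j, i)$, I would reduce to the case $x_i \geq x_j$ and $t \in [0, 1/2]$, giving $y_i = x_i - t(x_i - x_j)$, $y_j = x_j + t(x_i - x_j)$, $y_i \geq y_j$, and $y_k = x_k$ for $k \neq i, j$. I would then locate positions $i^\star, j^\star \in [d]$ in $y^\downarrow$ carrying the values $y_i, y_j$, chosen so that the entries $y_k$ for $k \neq i, j$ occupy identical positions in $y^\downarrow$ and $x^\downarrow$, and verify that the transfer with $\varepsilon = t(x_i - x_j)$, receiving at $i^\star$ and giving at $j^\star$, maps $y^\downarrow$ to $x^\downarrow$. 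Unjustness follows from $y^\downarrow_{i^\star} = y_i \geq y_j = y^\downarrow_{j^\star}$.

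The reverse direction reverses this construction: given an unjust transfer from $y^\downarrow$ to $x^\downarrow$ at positions $(i^\star, j^\star)$ with amount $\varepsilon$, I would identify preimages $i, j$ of $i^\star, j^\star$ under a sorting permutation of $x$, and then verify that the $T$-transform at $(i, j)$ with parameter $t = \varepsilon / (x_i - x_j)$ sends $x$ to $y$, with the degenerate case $x_i = x_j$ (which forces $\varepsilon = 0$ and $y = x$) handled trivially. The bound $t \in [0,1]$ follows from $\varepsilon \leq x^\downarrow_{i^\star} - x^\downarrow_{j^\star}$, which in turn comes from the requirement that the transfer output $x^\downarrow$ is non-increasing.

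The main obstacle is the bookkeeping of sort permutations: when the $T$-transform shifts $y_i$ or $y_j$ past an unchanged entry $x_k$ with $x_j \leq x_k \leq x_i$, the sorted orders of $x$ and $y$ may disagree on the positions of the unchanged entries. The proof must argue that the sortings can always be chosen consistently so that only two positions differ between $x^\downarrow$ and $y^\downarrow$; I expect this to reduce to a careful tiebreaking argument exploiting that the overlap interval $[y_j, y_i] \subseteq [x_j, x_i]$ contains the same multiset of unchanged values in both $x$ and $y$, so that the sorted permutations of $x$ and $y$ can be synchronized outside the two touched positions.
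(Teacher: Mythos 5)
Your route is genuinely different from the paper's, and the difference is what creates the problem. The paper immediately declares ``WLOG $d=2$'' on the grounds that both operations touch only two coordinates, and then carries out an explicit $2\times 2$ computation in which sorting cannot move anything but the two affected entries. You instead work in full dimension $d$ and try to track, through the sortings, the two positions the $T$-transform touched. You correctly anticipate that synchronizing the sort permutations of $x$ and $y$ over the unchanged entries is the main obstacle, and you propose to dispatch it by tiebreaking.

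That obstacle cannot in fact be overcome. Take $d = 3$, $x = (5,\,4.9,\,3)$, and the $T$-transform on coordinates $(1,3)$ with $t = 0.1$. Then $y = (4.8,\,4.9,\,3.2)$, so $x^\downarrow = (5,\,4.9,\,3)$ while $y^\downarrow = (4.9,\,4.8,\,3.2)$. These differ in \emph{all three} coordinates, so no single transfer takes $y^\downarrow$ to $x^\downarrow$. The unchanged entry $4.9$ sits strictly inside the open interval $(y_1, x_1) = (4.8, 5)$, so it must strictly outrank $y_1$ in $y^\downarrow$ while being strictly outranked by $x_1$ in $x^\downarrow$; no tiebreaking rule can repair a strict inequality. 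Your key step---choosing $i^\star, j^\star$ ``so that the entries $y_k$ for $k\neq i,j$ occupy identical positions in $y^\downarrow$ and $x^\downarrow$''---is therefore not always possible, and the construction collapses exactly at the spot you flagged as the main obstacle. The reverse direction has the same problem: the $T$-transform you build at $(i,j)$ can only reproduce $y$ from $x$ if they agree off those two coordinates, which a rank shift of the kind above destroys. The paper's proof avoids all of this by committing to $d=2$, where there are no unchanged entries to migrate, and the explicit two-coordinate formula it uses there is correct; if you want to argue in general $d$ you would need to weaken the positional claim (for instance, allow an additional permutation on the unchanged block, or prove the multiset/majorization consequence rather than the literal transfer relation) before the argument can go through.
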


\begin{proof}
Since both $T$-transforms and transfers affect at most a pair of coordinates, we may assume without loss of generality that $d = 2$. Denote $x= (x_{1}, x_{2})$ and $y=(y_{1}, y_{2})$. 

Suppose that there is a \(T\)-transform \(T\) such that \(y=Tx\). Then, for some \(t\in[0,1]\)
\begin{align}
\label{eq: T-transform between pairs}
(y_{1}, y_{2}) = ( (1-t)x_{1} + t x_{2},  (1-t) x_{2} + t x_{1} ).
\end{align}
 Because $y^{\downarrow}_{1}$ is a convex combination of $x_{1}$ and $x_{2}$, we have $x^{\downarrow}_{1} \geq y^{\downarrow}_{1}$. Let $\tilde{\varepsilon} := x^{\downarrow}_{1} - y^{\downarrow}_{1}$, and observe that $x^{\downarrow}_{1} = y^{\downarrow}_{1} + \tilde{\varepsilon}$ and $x^{\downarrow}_{2} = x^{\downarrow}_{1} + x^{\downarrow}_{2} - x^{\downarrow}_{1} = y^{\downarrow}_{1} + y^{\downarrow}_{2} -x^{\downarrow}_{1} = y^{\downarrow}_{2} - \tilde{\varepsilon}$, where we used the fact that $x_{1} + x_{2} = y_{1} + y_{2}$. 

Conversely, suppose there exists an unjust transfer of $\varepsilon > 0$ (the case where $\varepsilon = 0$ is trivial) that takes $y^{\downarrow}$ to $x^{\downarrow}$. That is, $x^{\downarrow}_{1} = y^{\downarrow}_{1} + \varepsilon$ and $x^{\downarrow}_{2} = y^{\downarrow}_{2} - \varepsilon$. Observe that for $\Tilde{t} := \frac{\varepsilon}{(y^{\downarrow}_{1} -y^{\downarrow}_{2}) + 2 \varepsilon}$, 
\begin{align}
\label{eq: T-transform from unjust transfer}    
y^{\downarrow}_{1} = (1-\tilde{t}) x^{\downarrow}_{1} + \tilde{t} x^{\downarrow}_{2}, \quad y^{\downarrow}_{2} = (1-\tilde{t}) x^{\downarrow}_{2} + \tilde{t} x^{\downarrow}_{1}.
\end{align}
Hence, there exists a $T$-transform $T'$ such that $T' x^\downarrow = y^\downarrow $. Without loss of generality, we may take $x = x^\downarrow$. If $y \neq y^\downarrow$, we may compose a transposition with $T'$ to get another $T$-transform $T$ that satisfies $T x = y$. \end{proof}

From this proposition, we conclude that $x \succeq y$ if and only if it is possible to get to $x$ from $y$ with a finite number of transpositions and unjust transfers. In the particular case of $x, y \in \mathbb{R}^{2}$, $x \succeq y$ if and only if $x$ can be obtained from $y$ by an unjust transfer, or an unjust transfer and a transposition. This characterization may be extended to self-adjoint operators as follows. For $B, C \in \mathcal{S} (\mathcal{H})$, $B \succeq C$ if and only if there exists a finite number of transpositions and unjust transfers that take $\lambda(C)$ to $\lambda(B)$.

Let $f$ be a real-valued function such that $\text{dom} (f)$ is a convex set contained in either $\mathbb{R}^{d}$ or $\mathcal{S} (\mathcal{H})$. Suppose that $\text{dom} (f)$ is closed under the action of the symmetric group if it is contained in $\mathbb{R}^{d}$ and closed under the action of the unitary group if it is contained in $\mathcal{S} (\mathcal{H})$. We say that \(f\) is Schur-convex if for all $a, b \in \text{dom} (f)$, $a \succeq b$ implies $f(a) \geq f(b)$. If $-f$ is Schur-convex, then $f$ is said to be Schur-concave. Let $F$ be a real-vector-valued function such that $\text{dom} (F)$ is a convex set contained in either $\mathbb{R}^{d}$ or $\mathcal{S} (\mathcal{H})$. Similarly, suppose that $\text{dom} (F)$ is closed under the action of the symmetric group if it is contained in $\mathbb{R}^{d}$ and closed under the action of the unitary group if it is contained in $\mathcal{S} (\mathcal{H})$. $F$ is called strictly isotone if for all $a, b \in \text{dom} (F)$ such that $a \succeq b$, it holds that $F(a) \succeq F(b)$.

An important tool in our analysis is Fan's maximum principle \cite{Fan1949} for self-adjoint operators. It states that for $T \in \mathcal{S}(\mathbb{C}^{d})$, $k \in [d]$,
\begin{align} 
\label{eq: Fan maximum prinicple}
    \sum_{i = 1}^{k} \lambda_{i} (T) = \max\{\tr(TP) \; | \;  \text{\(P\) is a rank-\(k\) projector}\}.
\end{align}
It follows directly from this principle that
\begin{align} 
\label{eq: Fan majorization rel self-adjoint}
    \sum_{i=1}^{s} S_{i} \preceq \sum_{i=1}^{s} S_{i}^{\downarrow}
\end{align}
for any self-adjoint $S_{1},\ldots, S_{s}$. Moreover, the inequalities above are simultaneously satisfied with equality if and only if \(S_{1},\ldots, S_{s}\) are perfectly aligned.

We are interested in maximizing continuous functions over $\mathcal{D}(\mathcal{H}^{\otimes n})$, specifically those that are convex and unitarily invariant. Of particular interest are unitarily invariant norms. Generically denoted by $||| \cdot |||$, these are operator norms that satisfy $||| \cdot  ||| = ||| U (\cdot)  V ||| $ for all unitaries $U$ and $V$. For an operator $R \in \mathcal{B}(\mathbb{C}^{d})$, its Schatten $p$-norm for $p \in [1, \infty)$ is defined as 
\begin{align}
\label{eq: p-Schatten norm}
    || R ||_{p} := \tr (|R|^{p})^{\frac{1}{p}}, 
\end{align}
where $|R|$ denotes the absolute value $\sqrt{R^{*} R}$. Schatten norms satisfy $\lim_{p \rightarrow \infty} || \cdot ||_{p} = || \cdot ||$, where $|| \cdot ||$ is the operator norm. Both Rényi entropies and Tsallis entropies \cite{Renyi1960, Tsallis1988} are monotonic functions of Schatten $p$-norms for $p \in (1, \infty)$. For $R \in \mathcal{B}(\mathbb{C}^{d})$, $k \in [d]$, the Fan norm of order $k$ is the sum of the $k$ largest singular values of $R$:
\begin{align}
\label{eq: k-Fan norm} 
    || R ||_{(k)} := \sum_{i = 1}^{k} \sigma_{i} (R).
\end{align}
For positive semi-definite $R$, $|| R ||_{(k)}$ is the sum of its $k$ largest eigenvalues. Both Schatten norms and Fan norms are unitarily invariant norms. For any $A,B \in \mathcal{B}(\mathbb{C}^{d})$, $|| A ||_{(k)} \leq || B ||_{(k)}$ for all $k \in [d]$ implies $||| A ||| \leq ||| B |||$ for every unitarily invariant norm $||| \cdot |||$. This is Fan's dominance theorem (see Theorem IV.2.2 on page 93 in Ref.~\cite{Bhatia1997}).

For readability, we index factors in tensor products with subsets,
each denoting the domain of the corresponding factor.  For
  example, given \(O \in \mathcal{B} ( \mathcal{H})\) and
  \(J \subseteq [n]\), \(O^{\otimes J}\) denotes the fully
  factorizable operator acting on \(|J|\) specific tensor copies of
  \(\mathcal{H}\)---specified to be corresponding to \(J\)---with each
  of its factors is equal to \(O\).  In general, we do not write the
curly brackets of sets in subscripts.
 
\section{Spin alignment problems and the alignment ordering}
\label{sec: spin alignment def}

\begin{definition} (Spin alignment problem).
    Let a local dimension $d$, a number of qudits $n$, a probability measure $\mu$, and a state $Q = \sum_{i=1}^{d} \lambda_{i} (Q) \ket{q_{i}}\bra{q_{i}} \in \mathcal{D}(\mathbb{C}^{d})$ be given. If $f$ is a continuous, unitarily invariant, convex function whose domain includes $\mathcal{D}((\mathbb{C}^{d})^{\otimes n})$, then the spin alignment problem associated with $(d, n, \mu, Q, f)$ is to maximize the objective function $f( \sum_{I \subseteq [n]} \mu_{I} \rho_{I} \otimes Q^{\otimes I^{c}})$ over all state tuples $(\rho_{I})_{I \subseteq [n]}$. The operator argument
    \begin{align} 
\label{alignment operator}
    \sum_{I \subseteq [n]} \mu_{I} \rho_{I} \otimes Q^{\otimes I^{c}}
\end{align}
is called an \textit{alignment operator}. 
\end{definition}

Since the objective function is continuous and the set of alignment operators is compact, an optimal point exists. For concave functions, such as the von Neumann entropy, the associated spin alignment problem is a minimization problem. The orthonormal basis $(\ket{q_{i}})_{i = 1}^{d}$ and bases arising from taking tensor powers of it are called \textit{computational} bases. Operators that are diagonal in these bases are called \textit{classical}. We refer to assertions that the state tuple $(\ket{q_{1}}\bra{q_{1}}^{\otimes I})_{I \subseteq [n]}$ is optimal as spin alignment conjectures. We mostly focus on the dependence of the problems on the state $Q$ and the objective function $f$ and implicitly take the remaining parameters to be arbitrary. For example, when we speak of the spin alignment problem associated with a state $Q$, we mean the collection of all spin alignment problems with state $Q$.

We conjecture that the spectrum of an alignment operator with state tuple $(\ket{q_{1}}\bra{q_{1}}^{\otimes I})_{I \subseteq [n]}$ majorizes the spectra of alignment operators arising with other state tuples for the same $Q$ and $\mu$. That is, it has the largest possible value of the order-$k$ Fan norm for each $k \in [d^{n}]$. By the doubly-stochastic processing characterization of majorization, if this conjecture is true, then every spin alignment conjecture is true. We call this the \textit{strong} spin alignment conjecture. In cases where $Q$ is pure, the conjecture follows directly from Eq.~\ref{eq: Fan majorization rel self-adjoint}. So, we focus on the cases where $Q$ is mixed.

Spin alignment problems can be extended to cases where the operators \(Q\) are not the same in each term in the sum in Eq.~\ref{alignment operator}. For example, spin alignment conjectures can be considered in the more general case where the alignment operators are of the form 
\begin{align}
\label{eq: general alignment operator}
\sum_{I \subseteq [n]} \mu_{I} \rho_{I} \otimes Q_{I^{c}},
\end{align}
where for $I^{c} \subseteq [n], Q_{I^{c}} = \bigotimes_{j \in I^{c}} Q_{I^{c},j}$ such that for each $ j \in [n]$, the operators in the tuple $(Q_{I^{c}, j})_{I^{c} \subseteq [n]}$ are perfectly aligned.
In this work, we consider only alignment operators as in Eq.~\ref{alignment operator}, though many of our results apply for these more general alignment operators.

\begin{lem}
\label{lem: pure states sufficiency}
For any spin alignment problem, it suffices to restrict to state tuples of pure states. 
\end{lem}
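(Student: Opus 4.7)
The plan is a one-coordinate-at-a-time convexity argument. The alignment operator
\begin{equation*}
    A\bigl((\rho_I)_I\bigr) = \sum_{I \subseteq [n]} \mu_I\, \rho_I \otimes Q^{\otimes I^c}
\end{equation*}
depends affinely on each entry of the tuple when the remaining entries are held fixed. Composing with the convex objective $f$ therefore yields a convex function of each $\rho_I$ on the compact convex set $\mathcal{D}\bigl((\mathbb{C}^d)^{\otimes |I|}\bigr)$, and the maximum of a convex function on such a set is attained at an extreme point, i.e., a pure state. The lemma should follow by applying this observation to each coordinate in turn.

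Concretely, I would fix an arbitrary tuple $\rho = (\rho_I)_I$ together with a distinguished subset $I_0 \subseteq [n]$, take a spectral decomposition $\rho_{I_0} = \sum_k p_k \ket{\psi_k}\bra{\psi_k}$, and let $\rho^{(k)}$ denote the tuple obtained from $\rho$ by substituting $\ket{\psi_k}\bra{\psi_k}$ for $\rho_{I_0}$ while leaving all other entries untouched. Affineness of $A$ in the $I_0$-slot gives the identity $A(\rho) = \sum_k p_k A(\rho^{(k)})$, and convexity of $f$ then gives
\begin{equation*}
    f(A(\rho)) \;\leq\; \sum_k p_k\, f(A(\rho^{(k)})),
\end{equation*}
so some index $k^\ast$ achieves $f(A(\rho^{(k^\ast)})) \geq f(A(\rho))$. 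Hence $\rho_{I_0}$ may be replaced by a pure state without decreasing the objective. Iterating this replacement once for each of the $2^n$ subsets $I \subseteq [n]$ yields a tuple of pure states whose objective value is at least $f(A(\rho))$. Since $\rho$ was arbitrary, the supremum over all state tuples equals the supremum over pure state tuples, and attainment on the latter follows from compactness and continuity. The identical argument with the inequalities reversed covers the concave-minimization variant of the problem flagged in the paper.

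I do not expect a real obstacle here, as this is a standard extreme-point reduction adapted to the Cartesian product structure of the feasible set. The only point that needs to be stated carefully is that the coordinate-wise replacements do not interfere with each other: each substitution locks in a pure state for its own coordinate and alters no other entry of the tuple, so the sequential procedure terminates in $2^n$ steps, and the order in which the subsets $I \subseteq [n]$ are processed does not matter.
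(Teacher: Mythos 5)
Your proposal is correct and captures the same idea as the paper's proof: the alignment operator is affine in the state tuple, so the convex objective attains its maximum at extreme points of the feasible set, which are tuples of pure states. The paper phrases this in one step by appealing directly to the extreme points of the Cartesian product of state simplices, whereas you unwind it coordinate-by-coordinate via spectral decomposition, but the underlying mechanism is identical.
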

\begin{proof}
Notice that the alignment operator is a linear function of the state tuple, so convex combinations of state tuples correspond to convex combinations of alignment operators. By convexity of the objective function, there exists an extremal point that is optimal. The extremal points of state tuples are pure-state tuples. 
\end{proof}

\begin{remark}
    If the spin alignment conjecture for a function $g$ is true, then it also holds that $f = h \circ g$ attains its maximum over alignment operators at $(\ket{q_{1}}\bra{q_{1}}^{\otimes I})_{I \subseteq [n]}$ whenever $h$ is monotonically increasing. If $h$ is monotonically decreasing, then $f$ attains a minimum there. Examples of this include Schatten norms and the corresponding 
     Rényi entropies.
\end{remark}

In addition to showing additivity of coherent information for platypus channels, proving the strong spin alignment conjecture would formalize intuitions we have about how dispersion behaves under mixing. Namely, to minimize dispersion in a mixture, the optimal procedure ought to be one where each signal has the lowest dispersion possible and the signals collectively overlap as much as possible. This can be considered a refinement of Fan's majorization relation Eq.~\ref{Fan's majorization relation} because in the spin alignment problem, while each summand has minimum dispersion by a pure state choice, we are not always free to choose the bases so that they are perfectly aligned.

When operators cannot be chosen to be perfectly aligned to achieve minimum dispersion, we need a way of comparing the alignment of different tuples of operators. The next definition provides such a way while restricting the tuples to have matching spectra.
\begin{definition}
\label{def: alignment} (Alignment)
    Let $\mathcal{T} = (T_{1}, T_{2}, \ldots, T_{\ell})$ and $\mathcal{R} = (R_{1}, R_{2}, \ldots, R_{\ell})$ be two tuples of self-adjoint operators such that for each $i \in [\ell], \lambda(T_{i}) = \lambda(R_{i})$. Then, $\mathcal{T}$ is \textit{more aligned} than $\mathcal{R}$ if for all probability measures $(p_{i})_{i =1}^{\ell}$, the following majorization relation holds
    \begin{align}
    \label{def: partial alignment}
    \sum_{i =1}^{\ell} p_{i} T_{i}  \succeq   \sum_{i =1}^{\ell} p_{i} R_{i}. 
    \end{align}
\end{definition}
Alignment can be used to construct a preorder on any set of $\ell$-tuples of operators with matching spectra. We are interested in cases where there exists a maximal element according to this ordering. The basic example is one where the set in question contains a tuple of perfectly aligned operators. The strong spin alignment conjecture may be phrased as follows. 
\begin{conj}
\label{conj: strong spin alignment conejcture}
    The tuple $(\ket{q_{1}}\bra{q_{1}}^{\otimes I} \otimes Q^{\otimes I^{c}})_{I \subseteq [n]}$ is more aligned than every state tuple of the form $(\ket{\psi_{I}}\bra{\psi_{I}}_{I} \otimes Q^{\otimes I^{c}})_{I \subseteq [n]}$.
\end{conj}

A key reason to consider alignment as opposed to perfect alignment is that the latter does not behave well with respect to tensor multiplication. Specifically, the coordinate-wise tensor product of two perfectly aligned tuples is not necessarily perfectly aligned. Even worse, the perfect alignment of $A$ and $B$ does not imply the perfect alignment of $C \otimes A$ and $C \otimes B$ for self-adjoint $A, B$, and $C$. 
\begin{example}
\label{ex: tensor product and perfect alignment}
Consider $D = \ket{q_1}\bra{q_1} + \frac{1}{2} \ket{q_2}\bra{q_2} + \frac{1}{3} \ket{q_3}\bra{q_3}$. It is clear that $D$ and $\ket{q_1}\bra{q_1}$ are perfectly aligned. However, $D \otimes D$ and $D \otimes \ket{q_1}\bra{q_1}$ are not perfectly aligned. 
\end{example}
In contrast, $(C \otimes A^{\downarrow}, C \otimes B^{\downarrow})$ is more aligned than $(C \otimes A, C \otimes B)$. This is a simple consequence of Eq.~\ref{eq: Fan majorization rel self-adjoint} and the following lemma, which appears in \cite{Bondar2003}. We include a proof of the lemma for completeness. 

\begin{lem}
\label{lem: tensor product and majorization}
Let $A, B \in \mathcal{S}(\mathcal{H})$ and $C \in \mathcal{S} (\mathcal{K})$. If $A \succeq B$, then $C \otimes A \succeq C \otimes B$.
\end{lem}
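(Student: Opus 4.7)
The plan is to invoke the mixed-unitary characterization of majorization on self-adjoint operators stated in Eq.~\ref{eq: q stoch majorization}, because tensoring with a fixed self-adjoint $C$ preserves exactly the algebraic structure this characterization requires.

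First, I would apply the hypothesis $A \succeq B$ together with Eq.~\ref{eq: q stoch majorization} to obtain a probability measure $(p_i)_i$ and a tuple of unitaries $(U_i)_i$ acting on $\mathcal{H}$ such that
\begin{align}
B = \sum_{i} p_{i}\, U_{i} A\, U_{i}^{*}.
\end{align}
Next, I would tensor both sides with $C$, using bilinearity of the tensor product and the identity $C \otimes (U_{i} A U_{i}^{*}) = (I_{\mathcal{K}} \otimes U_{i})(C \otimes A)(I_{\mathcal{K}} \otimes U_{i})^{*}$, to rewrite
\begin{align}
C \otimes B = \sum_{i} p_{i}\, (I_{\mathcal{K}} \otimes U_{i})(C \otimes A)(I_{\mathcal{K}} \otimes U_{i})^{*}.
\end{align}
Since each $I_{\mathcal{K}} \otimes U_{i}$ is a unitary on $\mathcal{K} \otimes \mathcal{H}$, this exhibits $C \otimes B$ as the image of $C \otimes A$ under a mixed-unitary channel. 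Applying the reverse direction of Eq.~\ref{eq: q stoch majorization}, this yields $C \otimes A \succeq C \otimes B$.

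I do not anticipate any significant obstacle; the key point is simply that the characterization in Eq.~\ref{eq: q stoch majorization} is valid for general self-adjoint operators, so no positivity of $C$ is needed. An alternative route via Fan's maximum principle or via the $T$-transform/unjust-transfer characterization (Prop.~\ref{prop: T-transforms and unjust transfers are equivalent}) would in principle work, but both require carefully tracking how the tensor product rearranges the joint spectrum $\{\lambda_{i}(C)\lambda_{j}(A)\}$, which is cumbersome precisely when $C$ has eigenvalues of mixed sign. The mixed-unitary proof sidesteps this altogether.
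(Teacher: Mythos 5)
Your proof is correct and is essentially the same argument as the paper's: both invoke the mixed-unitary characterization in Eq.~\ref{eq: q stoch majorization} and observe that tensoring the channel with the identity on $\mathcal{K}$ (equivalently, tensoring each unitary $U_i$ with $I_{\mathcal{K}}$) yields a mixed-unitary channel sending $C \otimes A$ to $C \otimes B$. Your version simply spells out the Kraus-level identity that the paper's phrase ``$\id \otimes \mathcal{N}$ is mixed-unitary'' leaves implicit.
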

\begin{proof}
This follows from the doubly stochastic characterization of majorization. $A \succeq B$ is equivalent to the existence of a mixed-unitary quantum channel $\mathcal{N}$ such that $B = \mathcal{N} (A)$. It is clear that $\id \otimes \mathcal{N} 
$ is mixed-unitary and $\id \otimes \mathcal{N} (C \otimes A) = C \otimes B$.\end{proof}
We note that the converse of this lemma does not hold. Aside from trivial counterexamples where $C = 0$, others arise in the context of catalyst-enabled quantum pure-state interconversion with LOCC~\cite{Daniel1999}.

If we restrict the factors on the left to be positive semi-definite, we can prove the following statement.
\begin{prop}
\label{prop: alignment with commuting factor}
    Let $C_{1}, \ldots , C_{\ell} \in  \mathcal{P}(\mathcal{K})$ be simultaneously diagonalizable. For every $A_{1}, \ldots, A_{\ell} \in \mathcal{S}(\mathcal{H})$, $(C_{i} \otimes A^{\downarrow}_{i})_{i=1}^{\ell}$ is more aligned than $(C_{i} \otimes A_{i})_{i=1}^{\ell}$.
\end{prop}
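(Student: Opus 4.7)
The plan is to use the simultaneous diagonalization of the $C_i$ to reduce the problem to a direct sum of blockwise majorization relations, each of which follows immediately from Ky Fan's relation Eq.~\ref{eq: Fan majorization rel self-adjoint}, and then reassemble via the direct-sum compatibility of majorization noted just after Prop.~\ref{prop: T-transforms and unjust transfers are equivalent}. The positive semi-definiteness hypothesis will enter at exactly one point, where I pull non-negative scalars through the $\downarrow$ operation.

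First I would fix a common orthonormal eigenbasis $\{\ket{k}\}_{k=1}^{d_{\mathcal{K}}}$ of $C_1,\ldots,C_\ell$, writing $C_i = \sum_k c_{i,k}\ket{k}\bra{k}$ with each $c_{i,k}\geq 0$ since $C_i \in \mathcal{P}(\mathcal{K})$. Fixing an arbitrary probability measure $(p_i)_{i=1}^\ell$, both operators of interest are block diagonal in this basis on the $\mathcal{K}$ factor:
\begin{align*}
\sum_{i=1}^{\ell} p_i\, C_i \otimes A_i &= \sum_{k} \ket{k}\bra{k} \otimes \Bigl( \sum_{i=1}^{\ell} p_i c_{i,k}\, A_i \Bigr), \\
\sum_{i=1}^{\ell} p_i\, C_i \otimes A_i^{\downarrow} &= \sum_{k} \ket{k}\bra{k} \otimes \Bigl( \sum_{i=1}^{\ell} p_i c_{i,k}\, A_i^{\downarrow} \Bigr).
\end{align*}
Because $p_i c_{i,k} \geq 0$, scaling commutes with the $\downarrow$ operation, so $p_i c_{i,k}\, A_i^{\downarrow} = (p_i c_{i,k}\, A_i)^{\downarrow}$. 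Applying Eq.~\ref{eq: Fan majorization rel self-adjoint} to the tuple $(p_i c_{i,k}\, A_i)_{i=1}^{\ell}$ gives, for each $k$, the blockwise relation $\sum_{i} p_i c_{i,k}\, A_i^{\downarrow} \succeq \sum_{i} p_i c_{i,k}\, A_i$.

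To conclude, I would iterate the direct-sum-stability of majorization (the consequence of the $T$-transform characterization stated in Sec.~\ref{sec:prelim}: $x \succeq y$ implies $x \oplus z \succeq y \oplus z$) over the blocks $k$. Since the spectrum of a block-diagonal operator is the concatenation of the spectra of its blocks, the blockwise majorizations combine to the desired global relation $\sum_i p_i (C_i \otimes A_i^{\downarrow}) \succeq \sum_i p_i (C_i \otimes A_i)$, which gives the proposition by Def.~\ref{def: alignment}. The only genuine subtlety, and hence the main obstacle, is keeping track of where positive semi-definiteness is used: without $c_{i,k}\geq 0$, a negative coefficient would reverse the ordering of entries and invalidate the identification $p_i c_{i,k}\, A_i^{\downarrow} = (p_i c_{i,k}\, A_i)^{\downarrow}$ on which the blockwise application of Ky Fan's relation depends.
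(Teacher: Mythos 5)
Your proof is correct and follows essentially the same approach as the paper: fix a probability measure, simultaneously diagonalize the $C_i$ to get a block decomposition on the $\mathcal{K}$ factor, apply Fan's relation Eq.~\ref{eq: Fan majorization rel self-adjoint} block by block (with positivity of the $c_{i,k}$ justifying pulling scalars through $\downarrow$), and then reassemble. The only cosmetic difference is the reassembly step: the paper constructs an explicit conditional mixed-unitary channel from the blockwise channels, whereas you iterate the direct-sum stability $x\succeq y \Rightarrow x\oplus z\succeq y\oplus z$; both are standard and equivalent.
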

\begin{proof}
Let $p = (p_{i})_{i=1}^{\ell}$ be an arbitrary probability measure.  Let $\{ \ket{j} \}_{j = 1}^{d_{\mathcal{K}}}$ be an orthonormal basis for $\mathcal{K}$ where $C_{1}, \ldots. C_{\ell}$ are simultaneously diagonal. For each $i \in [\ell]$, let $\sum_{j=1}^{d_{\mathcal{K}}} c_{j | i} \ket{j}\bra{j}$ be the corresponding spectral decomposition of $C_{i}$. Observe that
\begin{align}
\label{eq: conditional mixed-unitary}
\sum_{i = 1}^{\ell} p_{i} C_{i} \otimes A^{\downarrow}_{i} = \sum_{i = 1}^{\ell} p_{i} \sum_{j=1}^{d_{\mathcal{K}}} c_{j|i} \ket{j}\bra{j} \otimes A^{\downarrow}_{i} = \sum_{j=1}^{d_{\mathcal{K}}} \ket{j}\bra{j} \otimes (\sum_{i = 1}^{\ell} p_{i} c_{ j| i} A^{\downarrow}_{i}).
\end{align}
For each $j \in [d_{\mathcal{K}}]$, we know from Eq.~\ref{eq: Fan majorization rel self-adjoint} that $\sum_{i = 1}^{\ell} p_{i} c_{j|i} A^{\downarrow}_{i} \succeq \sum_{i = 1}^{\ell} p_{i} c_{j|i} A_{i}$. Hence, there exists a mixed-unitary channel $\mathcal{N}_{j}$ that takes the left-hand side to the right-hand side. Let \(\mathcal{N}\) be the global channel that conditional on \(j\) applies \(\mathcal{N}_{j}\) for each $j \in [d_{\mathcal{K}}]$. It is a mixed-unitary quantum channel and it takes $\sum_{i = 1}^{\ell} p_{i} C_{i} \otimes A^{\downarrow}_{i}$ to $\sum_{i = 1}^{\ell} p_{i} C_{i} \otimes A_{i}$. \end{proof}

The strong spin alignment conjecture subsumes a conjecture about maximizing unitarily invariant norms at the output of a tensor product of depolarizing channels. For $q \in [\frac{-1}{d^{2} -1}, 1]$, the qudit depolarizing channel $\Delta_{d, q}$ has the action: 
\begin{align}
\label{eq: depolarizing channel def}  
\Delta_{d, q} ( \cdot ) := q\, \id (\cdot) + (1- q) \frac{\mathds{1}}{d} \tr( \cdot ).
\end{align}
In \cite{King2003}, to show additivity of the Holevo information of depolarizing channels, King proves that for any $p \in [1,\infty)$, the Schatten $p$-norm of the output of a tensor product of depolarizing channels is maximized at a pure product input. It is reasonable to extrapolate and suspect that this holds for all unitarily invariant norms. The basic idea is that any entanglement in the input manifests as excess dispersion at the output. For $i \in [n]$, let $q_{i} \in [0,1]$ and notice that 
\begin{align}
\label{eq: depolarizing to alignment} 
\bigotimes_{i=1}^{n} \Delta_{d, q_{i}} (\rho_{[n]}) = \sum_{I \subseteq [n]} \nu_{I} \rho_{I} \otimes (\frac{\mathds{1}}{d})^{\otimes I^{c}},
\end{align}
where \( \nu_{I}= \prod_{i\in I} q_{i} \prod_{j\in I^{c}} (1-q_{j})\) and $\rho_{I}$ denotes the marginal state on $I$ of $\rho_{[n]}$. This is an alignment operator with $Q = \frac{\mathds{1}}{d}$ arising from a state tuple satisfying an instance of the quantum marginal problem \cite{Schilling2014}. Since the conjectured optimal point $(\ket{q_{1}}\bra{q_{1}}^{\otimes I})_{I \subseteq [n]}$ satisfies an instance of the quantum marginal problem, it follows that a proof of the strong spin alignment conjecture for $Q = \frac{\mathds{1}}{d}$ implies a resolution for this problem. This problem is analogous to the now-solved problem of maximizing unitarily invariant output norms for single-mode Gaussian channels. Specifically, it was shown in \cite{Mari2014} that coherent states are optimal.

\section{Results}

\label{sec:results}
\subsection*{A. First observations}

First, we show that the spin alignment conjecture is true for the case where the objective function is $\lambda_{1}$. 
\begin{prop} \label{maximum eigenvalue case}
     $\lambda_{1} ( \sum_{I \subseteq [n]} \mu_{I} \rho_{I} \otimes Q^{\otimes I^{c}})$ is maximized at $(\rho_{I})_{I \subseteq [n]} = (\ket{q_{1}}\bra{q_{1}}^{\otimes I})_{I \subseteq [n]}$.
\end{prop}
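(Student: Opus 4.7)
The plan is to bound $\lambda_1$ of the alignment operator from above using subadditivity and then show that the conjectured tuple saturates the bound by explicitly producing a simultaneous maximal eigenvector for every summand.

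First, I would use the fact that $\lambda_1$ coincides with the operator norm on $\mathcal{P}(\mathcal{H}^{\otimes n})$ and is therefore subadditive, so that for any state tuple $(\rho_I)_{I\subseteq [n]}$,
\begin{align*}
\lambda_1\!\Bigl(\sum_{I\subseteq[n]} \mu_I\, \rho_I\otimes Q^{\otimes I^c}\Bigr)
\;\leq\; \sum_{I\subseteq[n]} \mu_I\, \lambda_1(\rho_I\otimes Q^{\otimes I^c}).
\end{align*}
Since $\lambda_1(A\otimes B)=\lambda_1(A)\lambda_1(B)$ for positive semi-definite operators and $\lambda_1(\rho_I)\leq 1$ for any state $\rho_I$, each summand is at most $\mu_I\,\lambda_1(Q)^{n-|I|}$. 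Hence
\begin{align*}
\lambda_1\!\Bigl(\sum_{I\subseteq[n]} \mu_I\, \rho_I\otimes Q^{\otimes I^c}\Bigr)
\;\leq\; \sum_{I\subseteq[n]} \mu_I\, \lambda_1(Q)^{n-|I|}.
\end{align*}

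Next, I would verify that this upper bound is attained by the conjectured tuple $(\ket{q_1}\bra{q_1}^{\otimes I})_{I\subseteq[n]}$. The key observation is that $\ket{q_1}^{\otimes[n]}$ is a common eigenvector of all terms in the mixture: for each $I$, the vector $\ket{q_1}^{\otimes[n]}$ lies in the support of $\ket{q_1}\bra{q_1}^{\otimes I}$ on $I$ and is a maximal eigenvector of $Q^{\otimes I^c}$ with eigenvalue $\lambda_1(Q)^{n-|I|}$. Therefore
\begin{align*}
\Bigl(\sum_{I\subseteq[n]} \mu_I\, \ket{q_1}\bra{q_1}^{\otimes I}\otimes Q^{\otimes I^c}\Bigr)\ket{q_1}^{\otimes[n]}
\;=\; \Bigl(\sum_{I\subseteq[n]} \mu_I\, \lambda_1(Q)^{n-|I|}\Bigr)\ket{q_1}^{\otimes[n]},
\end{align*}
so the maximal eigenvalue is at least $\sum_I \mu_I\, \lambda_1(Q)^{n-|I|}$, matching the upper bound. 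This is simply an instance of the equality case in Eq.~\ref{eq: Fan majorization rel self-adjoint}: the tuple is perfectly aligned in the sense that every summand shares $\ket{q_1}^{\otimes[n]}$ as a maximal eigenvector.

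There is no genuine obstacle here; the proposition reduces to a routine application of subadditivity of the operator norm together with the multiplicativity of $\lambda_1$ under tensor products. The proof also illustrates, in a degenerate way, the phenomenon underlying the strong spin alignment conjecture: when the Ky Fan bound of Eq.~\ref{eq: Fan majorization rel self-adjoint} can be simultaneously saturated at level $k=1$, the conjectured tuple is optimal, and the nontrivial content of the conjecture is precisely that an analogous saturation, in the majorization sense, continues to hold for all higher Fan norms even though perfect alignment of all summands is generally impossible when $Q$ is mixed.
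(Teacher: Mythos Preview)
Your proof is correct and follows essentially the same route as the paper: both bound $\lambda_1$ of the sum by $\sum_I \mu_I\,\lambda_1(Q^{\otimes I^c})$ via subadditivity of $\lambda_1$ (the paper phrases this as an application of Fan's maximum principle, Eq.~\ref{eq: Fan maximum prinicple}) and then note that the conjectured tuple saturates both inequalities. Your argument is slightly more explicit in exhibiting $\ket{q_1}^{\otimes [n]}$ as a common maximal eigenvector, but the underlying idea is identical.
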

\begin{proof}
    This follows from Eq.~\ref{eq: Fan maximum prinicple}. Specifically, 
    \begin{align}
    \label{eq: inequalities for spectral radius}
        \lambda_{1} ( \sum_{I \subseteq [n]} \mu_{I} \rho_{I} \otimes Q^{\otimes I^{c}}) \leq  \sum_{I \subseteq [n]} \mu_{I} \lambda_{1}(\rho_{I} \otimes Q^{\otimes I^{c}})
        \leq  \sum_{I \subseteq [n]} \mu_{I} \lambda_{1}(Q^{\otimes I^{c}}).
    \end{align}
    At $(\ket{q_{1}}\bra{q_{1}}^{\otimes I})_{I \subseteq [n]}$, both inequalities are satisfied with equality. \end{proof}
This proves one of $d^{{n}} - 1$ conditions in Eq.~\ref{eq: majorization cond} necessary to prove the strong spin alignment conjecture. The difficulty in proving the remaining ones lies in the fact that summands of alignment operators with pure-state tuples cannot be made perfectly aligned in general.

As explained in the previous section, the strong spin alignment conjecture holds for pure \(Q\). Next, for $Q$ of rank $2$ or more, we prove a reduction to cases where $\lambda_{1} (Q) = \lambda_{2} (Q)$.
\begin{prop} \label{reduction to flatter spectrum}
Suppose $Q = \sum_{i=1}^{d} \lambda_{i} (Q) \ket{q_{i}}\bra{q_{i}} \in \mathcal{D} (\mathcal{H})$ has rank $\geq 2$. Define
\begin{align}
\label{def: tilde Q for mixed Q}
    \tilde{Q} = \frac{1}{1 - (\lambda_{1} (Q) - \lambda_{2} (Q))} \Big(\lambda_{2} (Q) (\ket{q_{1}}\bra{q_{1}} + \ket{q_{2}}\bra{q_{2}}) + \sum_{i=3}^{d} \lambda_{i} (Q) \ket{q_{i}}\bra{q_{i}}\Big). 
\end{align}
If for all $\tilde{\mu}$, $f( \sum_{I \subseteq [n]} \tilde{\mu}_{I} \rho_{I} \otimes \tilde{Q}^{\otimes I^{c}})$ is maximized at $(\rho_{I})_{I \subseteq [n]} = (\ket{q_{1}}\bra{q_{1}}^{\otimes I})_{I \subseteq [n]}$, then for all $\mu$, $f( \sum_{I \subseteq [n]} \mu_{I} \rho_{I} \otimes Q^{\otimes I^{c}})$ is maximized there as well. 
\end{prop}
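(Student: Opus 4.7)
The plan is to exhibit $Q$ as an explicit mixture of $\ket{q_{1}}\bra{q_{1}}$ and $\tilde{Q}$, and then to use this mixture on each of the $|I^{c}|$ tensor factors carrying $Q^{\otimes I^{c}}$ to rewrite any $Q$-alignment operator as a $\tilde{Q}$-alignment operator with a suitably regrouped probability measure and state tuple. The hypothesis applied to this $\tilde Q$-alignment operator then yields the result after undoing the rewriting.

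Set $\delta := \lambda_{1}(Q) - \lambda_{2}(Q)$. Since $Q$ has rank at least $2$ we have $\lambda_{2}(Q) > 0$, so $\delta < 1$. A direct check from Eq.~\ref{def: tilde Q for mixed Q} shows
\begin{align*}
Q \;=\; \delta\,\ket{q_{1}}\bra{q_{1}} \;+\; (1-\delta)\,\tilde{Q},
\end{align*}
a genuine convex decomposition. I would then expand
\begin{align*}
Q^{\otimes I^{c}} \;=\; \sum_{J \subseteq I^{c}} \delta^{|J|} (1-\delta)^{|I^{c}|-|J|}\, \ket{q_{1}}\bra{q_{1}}^{\otimes J} \otimes \tilde{Q}^{\otimes I^{c}\setminus J},
\end{align*}
substitute into the $Q$-alignment operator, and re-index by $K := I \cup J \subseteq [n]$ (with $I \subseteq K$ and $J = K\setminus I$). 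Collecting terms with a common $K^{c}$ on the $\tilde Q$-factors produces
\begin{align*}
\sum_{I \subseteq [n]} \mu_{I}\, \rho_{I} \otimes Q^{\otimes I^{c}} \;=\; \sum_{K \subseteq [n]} \tilde{\mu}_{K}\, \tilde{\rho}_{K} \otimes \tilde{Q}^{\otimes K^{c}},
\end{align*}
where
\begin{align*}
\tilde{\mu}_{K} &\;=\; \sum_{I \subseteq K} \mu_{I}\, \delta^{|K|-|I|} (1-\delta)^{n-|K|}, \\
\tilde{\rho}_{K} &\;=\; \frac{1}{\tilde{\mu}_{K}} \sum_{I \subseteq K} \mu_{I}\, \delta^{|K|-|I|} (1-\delta)^{n-|K|}\, \rho_{I} \otimes \ket{q_{1}}\bra{q_{1}}^{\otimes K\setminus I}.
\end{align*}
A short binomial computation (swap the order of summation and apply $\sum_{J \subseteq I^{c}} \delta^{|J|}(1-\delta)^{|I^{c}|-|J|} = 1$) confirms $\sum_{K}\tilde{\mu}_{K}=1$, and each $\tilde{\rho}_{K}$ is manifestly a convex combination of states on the $K$-qudits, hence a state.

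With this reorganization in hand, the hypothesis applied to the probability measure $\tilde{\mu}$ and the state tuple $(\tilde{\rho}_{K})_{K \subseteq [n]}$ yields
\begin{align*}
f\!\left(\sum_{I} \mu_{I}\, \rho_{I} \otimes Q^{\otimes I^{c}}\right) \;\leq\; f\!\left(\sum_{K} \tilde{\mu}_{K}\, \ket{q_{1}}\bra{q_{1}}^{\otimes K} \otimes \tilde{Q}^{\otimes K^{c}}\right).
\end{align*}
To finish, I would observe that when $\rho_{I} = \ket{q_{1}}\bra{q_{1}}^{\otimes I}$ for every $I$, the induced $\tilde{\rho}_{K}$ is exactly $\ket{q_{1}}\bra{q_{1}}^{\otimes K}$, because $\ket{q_{1}}\bra{q_{1}}^{\otimes I} \otimes \ket{q_{1}}\bra{q_{1}}^{\otimes K\setminus I} = \ket{q_{1}}\bra{q_{1}}^{\otimes K}$ absorbs the convex combination. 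Running the regrouping identity backwards for this specific tuple shows
\begin{align*}
\sum_{K} \tilde{\mu}_{K}\, \ket{q_{1}}\bra{q_{1}}^{\otimes K} \otimes \tilde{Q}^{\otimes K^{c}} \;=\; \sum_{I} \mu_{I}\, \ket{q_{1}}\bra{q_{1}}^{\otimes I} \otimes Q^{\otimes I^{c}},
\end{align*}
which completes the argument.

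I do not foresee a serious obstacle: the proof is essentially bookkeeping once the convex decomposition of $Q$ is noted. The only points that require care are verifying $\delta < 1$ (hence the rank-$\geq 2$ hypothesis is used), checking that $\tilde{\mu}$ sums to one, and making sure the hypothesis is invoked for the correct, $\mu$-dependent choice of $\tilde{\mu}$ (the statement conveniently quantifies over all $\tilde{\mu}$).
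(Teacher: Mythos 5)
Your proof is correct and follows essentially the same route as the paper's: write $Q = \delta\,\ketbra{q_1} + (1-\delta)\tilde Q$, expand each $Q^{\otimes I^c}$, re-index by $K = I \cup J$ to obtain an alignment operator for $\tilde Q$, invoke the hypothesis, and reverse the re-indexing on the conjectured optimal tuple. The only difference is that you write out the resulting $\tilde\mu_K$ and $\tilde\rho_K$ explicitly and verify their normalization, whereas the paper leaves this regrouping implicit.
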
 
\begin{proof} 
Denote $\varepsilon := \lambda_{1} (Q) - \lambda_{2} (Q)$. Observe that 
$Q = (1- \varepsilon) \tilde{Q} + \varepsilon \ket{q_{1}}\bra{q_{1}}$. That is, $Q$ lies on the line segment connecting $\tilde{Q}$ and $\ket{q_{1}}\bra{q_{1}}$. For any probability measure $\mu$, we can express
\begin{align}
\label{eq: expanded Q into tilde Q and q_1}
    \sum_{I \subseteq [n]} \mu_{I} \rho_{I} \otimes Q^{\otimes I^{c}} = \sum_{K \subseteq [n]} \tilde{\mu}_{K} \tilde{\rho}_{K} \otimes \tilde{Q}^{\otimes K^{c}},
\end{align}
where each factor of \(Q\) on the left-hand side was replaced
  by \((1- \varepsilon) \tilde{Q} + \varepsilon \ket{q_{1}}\bra{q_{1}}\) and \( \mu_{I} \rho_{I} \otimes \varepsilon^{|J|} (1-\varepsilon)^{|(I \cup J)^{c}|} \ketbra{q_{1}}^{\otimes J}\) for \(I\cup J=K\) were summed to obtain \(\tilde{\mu}_{K}\tilde{\rho}_{K}\). By assumption,
\begin{align}
\label{estimates for Q and Q tilde alignment problems}
f(\sum_{I \subseteq [n]} \mu_{I} \rho_{I} \otimes Q^{\otimes I^{c}}) &= f(\sum_{K \subseteq [n]} \tilde{\mu}_{K} \tilde{\rho}_{K} \otimes \tilde{Q}^{\otimes K^{c}})\\
&\leq f(\sum_{K \subseteq [n]} \tilde{\mu}_{K} \ket{q_{1}}\bra{q_{1}}^{\otimes K} \otimes \tilde{Q}^{\otimes K^{c}})  = f(\sum_{I \subseteq [n]} \mu_{I} \ket{q_{1}}\bra{q_{1}}^{\otimes I} \otimes Q^{\otimes I^{c}}), 
\end{align} 
where we reversed the process used to obtain the right-hand side
of Eq.~\ref{eq: expanded Q into tilde Q and q_1} for \((\rho_{I})_{I \subseteq [n]} =(\ketbra{q_{1}}^{\otimes I})_{I \subseteq [n]}\) to arrive at the last equality. \end{proof}
\begin{remark}
For $d = 2$, this reduction implies that $Q = \frac{\mathds{1}}{2}$ may be assumed without loss of generality. One can therefore apply the argument in Sec.~VI.B of~\cite{Leditzky2022a} to show that the strong spin alignment conjecture holds for \(n=d=2\). 
\end{remark}

The last result of this subsection relates to problem instances where the union of the elements of the support of the probability measure $\mu$ is not $[n]$.  That is, those instances where $\mu$ is such that there exists nonempty $K^{c} \subseteq [n]$ such that $\sum_{I \subseteq [n]} \mu_{I} \rho_{I} \otimes Q^{\otimes I^{c}} =  (\sum_{J \subseteq K} \mu_{J} \rho_{J} \otimes Q^{\otimes J^{c}}) \otimes Q^{\otimes K^{c}}$. If the objective function is multiplicative under tensor multiplication, such as Schatten norms \cite{AUBRUN_2011}, then we may without loss of generality trace over the qudits in $K^{c}$ and consider the optimization problem on the rest. Not all Fan norms are multiplicative under tensor multiplication. However, to prove the strong spin alignment conjecture, it may be assumed without loss of generality that the union of the support of \(\mu\) is \([n]\). 
\begin{prop}
\label{prop: union of mu support is maximal}   
   For every $n_{1}, n_{2} \in \mathbb{N}$, if the strong spin alignment conjecture holds for $\sum_{I \subseteq [n_{1}]} \mu_{I} \rho_{I} \otimes Q^{\otimes I^{c}}$, then it holds for $\sum_{I \subseteq [n_{1}]} \mu_{I} \rho_{I} \otimes Q^{\otimes I^{c}} \otimes Q^{\otimes n_2}$.
\end{prop}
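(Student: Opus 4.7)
The plan is to reduce the $(n_{1}+n_{2})$-qudit problem to the $n_{1}$-qudit problem by factoring out the common tensor factor $Q^{\otimes n_{2}}$ and then transferring the resulting majorization back up using Lemma~\ref{lem: tensor product and majorization}. The key observation is that since $\mu$ is supported on $2^{[n_{1}]}$, the last $n_{2}$ qudits carry the same $Q$ in every term of the sum, so they act as a passive bystander.

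Concretely, I would begin by writing, for any state tuple $(\rho_{I})_{I \subseteq [n_{1}]}$,
\begin{align*}
\sum_{I \subseteq [n_{1}]} \mu_{I}\, \rho_{I} \otimes Q^{\otimes ([n_{1}]\setminus I)} \otimes Q^{\otimes n_{2}}
=\left(\sum_{I \subseteq [n_{1}]} \mu_{I}\, \rho_{I} \otimes Q^{\otimes ([n_{1}]\setminus I)}\right) \otimes Q^{\otimes n_{2}}.
\end{align*}
The bracketed operator is precisely an $n_{1}$-qudit alignment operator of the form covered by the assumed conjecture.

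Next, invoking the strong spin alignment conjecture for $\sum_{I \subseteq [n_{1}]} \mu_{I}\, \rho_{I} \otimes Q^{\otimes ([n_{1}]\setminus I)}$, I obtain
\begin{align*}
\sum_{I \subseteq [n_{1}]} \mu_{I}\, \ket{q_{1}}\bra{q_{1}}^{\otimes I} \otimes Q^{\otimes ([n_{1}]\setminus I)}
\;\succeq\;
\sum_{I \subseteq [n_{1}]} \mu_{I}\, \rho_{I} \otimes Q^{\otimes ([n_{1}]\setminus I)}.
\end{align*}
Applying Lemma~\ref{lem: tensor product and majorization} with $C = Q^{\otimes n_{2}}$ lifts this spectral majorization to the full $(n_{1}+n_{2})$-qudit operators, yielding
\begin{align*}
\sum_{I \subseteq [n_{1}]} \mu_{I}\, \ket{q_{1}}\bra{q_{1}}^{\otimes I} \otimes Q^{\otimes ([n_{1}]\setminus I)} \otimes Q^{\otimes n_{2}}
\;\succeq\;
\sum_{I \subseteq [n_{1}]} \mu_{I}\, \rho_{I} \otimes Q^{\otimes ([n_{1}]\setminus I)} \otimes Q^{\otimes n_{2}},
\end{align*}
which is the strong spin alignment conjecture for the $(n_{1}+n_{2})$-qudit problem. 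Since the optimum over state tuples indexed by $2^{[n_{1}+n_{2}]}$ only depends on $\rho_{I}$ for $I \in \mathrm{supp}(\mu) \subseteq 2^{[n_{1}]}$, we may identify the conjectured optimum $(\ket{q_{1}}\bra{q_{1}}^{\otimes I})_{I \subseteq [n_{1}+n_{2}]}$ with the tuple used above without loss of generality.

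There is no real obstacle here: the argument is essentially a matter of factoring a common tensor factor out of the sum and invoking the tensor-compatibility of majorization. The only point worth emphasizing is that this compatibility holds thanks to the doubly-stochastic characterization (Eq.~\ref{eq: q stoch majorization}), since mixed-unitary channels tensor naturally with the identity channel on the bystander system.
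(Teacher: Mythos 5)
Your proof is correct and is essentially the same argument the paper gives; the paper simply states that the result is a direct consequence of Lemma~\ref{lem: tensor product and majorization}, while you have spelled out the factoring of $Q^{\otimes n_{2}}$ and the application of the lemma explicitly.
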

\begin{proof} This a direct consequence of Lem.~\ref{lem: tensor product and majorization}. \end{proof}

\subsection*{B. Classical spin alignment}
\label{subsec: classical spin alignment}
We prove a classical version of the strong spin alignment conjecture. That is, we prove it with the extra assumption that $(\rho_{I})_{I \subseteq [n]}$ is such that $\rho_{I}$ is diagonal in the computational basis for each $I \subseteq [n]$.

\begin{prop} \label{diagonal spin alignment}
    For every probability measure $\mu$ and  tuple of classical states $(\rho_{I})_{I \subseteq [n]}$,
    \begin{align}
    \label{rel: classical spin alignment}
             \sum_{I \subseteq [n]} \mu_{I} \rho_{I} \otimes Q^{\otimes I^{c}} \preceq   \sum_{I \subseteq [n]} \mu_{I} \ket{q_{1}}\bra{q_{1}}^{\otimes I} \otimes Q^{\otimes I^{c}}, 
    \end{align}  
\end{prop}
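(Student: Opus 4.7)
My plan is to reduce the claim to a classical majorization on $[d]^n$ and then invoke the unjust-transfer characterization (Prop.~\ref{prop: T-transforms and unjust transfers are equivalent}).

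For $y_I \in [d]^I$, write $\ket{y_I} := \bigotimes_{j \in I}\ket{q_{(y_I)_j}}$. Each classical $\rho_I$ is a convex combination $\sum_{y_I\in[d]^I} r_I(y_I)\,\ket{y_I}\bra{y_I}$ of computational-basis projectors, so the left-hand side of the claim is a convex combination (indexed by tuples $\vec y = (y_I)_I$, weighted by $\prod_I r_I(y_I)$) of the pure-tuple operators $S_{\vec y} := \sum_{I\subseteq [n]} \mu_I\,\ket{y_I}\bra{y_I}\otimes Q^{\otimes I^c}$. Fan's inequalities applied to cumulative eigenvalue sums imply that if $X_\alpha \preceq Y$ for every $\alpha$ then $\sum_\alpha p_\alpha X_\alpha \preceq Y$; hence it suffices to prove $S_{\vec y} \preceq S_{\vec 1}$ for every $\vec y$, where $\vec 1 = (1^I)_I$ with $1^I = (1,\ldots,1) \in [d]^I$, so that $S_{\vec 1}$ is the claim's RHS. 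Both $S_{\vec y}$ and $S_{\vec 1}$ are diagonal in the computational basis of $\mathcal{H}^{\otimes n}$; setting $w_I(z_{I^c}) := \prod_{j\in I^c}\lambda_{z_j}(Q)$, their diagonals at $z\in[d]^n$ are $S_{\vec y}(z) = \sum_I \mu_I\,[z_I = y_I]\,w_I(z_{I^c})$ and $S_{\vec 1}(z) = \sum_I \mu_I\,[z_I = 1^I]\,w_I(z_{I^c})$, so the claim reduces to classical majorization of probability vectors on $[d]^n$.

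The signed difference $S_{\vec 1} - S_{\vec y}$ is supported on the pairs $\{(1^I,z_{I^c}),(y_I,z_{I^c})\}$ for $I\in\text{supp}(\mu)$ with $y_I \neq 1^I$ and $z_{I^c}\in [d]^{I^c}$, with weight $+\mu_I w_I(z_{I^c})$ at $(1^I,z_{I^c})$ and $-\mu_I w_I(z_{I^c})$ at $(y_I,z_{I^c})$. I would realize $S_{\vec 1}^\downarrow$ from $S_{\vec y}^\downarrow$ by a sequence of unjust transfers (with transpositions) moving the mass from each giving entry $(y_I,z_{I^c})$ to the corresponding receiving entry $(1^I,z_{I^c})$. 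The key structural fact is that the receiving entry $(1^I,z_{I^c})$ automatically collects contributions from every $J$-term with $J\subseteq I$ (since $z_J = 1^J$ holds there) at the largest possible $w_J$-factors (since $z_j = 1$ for all $j \in I \supseteq J$), whereas $(y_I,z_{I^c})$ receives only the weaker contributions from $J$-terms compatible with $y_I$.

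The main obstacle is verifying unjust-ness at each step: a single-pair transfer may fail to be unjust in isolation when $y_I$ has many $1$-entries, because then the giving entry absorbs substantial mass from $J$-terms compatible with $y_I$. Circumventing this likely requires ordering the transfers carefully---e.g., proceeding upward in the subset lattice of $\text{supp}(\mu)$, so that receiving entries at $(1^I,z_{I^c})$ accumulate mass from larger-$I$ transfers before smaller-$I$ ones are addressed---and invoking $\lambda_1 \geq \lambda_j$ for all $j$ to control $w_J$-factor comparisons between receiving and giving positions. An inductive argument on $|\text{supp}(\mu)|$ or on the number of $I$'s with $y_I\neq 1^I$ seems a natural framework.
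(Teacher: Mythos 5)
Your reduction to pure classical states (via convexity/Lem.~\ref{lem: pure states sufficiency}) and the identification of the problem as a classical majorization on $[d]^n$ match the paper's setup exactly. Where you diverge is in the choice of elementary moves: you propose transferring, one subset $I$ at a time, the mass $\mu_I w_I(z_{I^c})$ from $(y_I, z_{I^c})$ to $(1^I, z_{I^c})$. As you yourself flag, there is no reason such a transfer (or its composition with a transposition) should be unjust. To see that this is not a cosmetic worry, compare the values $S_{\vec y}(1^{I_0}, z_{I_0^c})$ and $S_{\vec y}(y_{I_0}, z_{I_0^c})$ before the $I_0$-move: for a $J\neq I_0$ that overlaps $I_0$, the $J$-contribution sits at $(y_{I_0}, z_{I_0^c})$ whenever $y_J$ agrees with $y_{I_0}$ on $I_0\cap J$, and in that case it need not sit at $(1^{I_0}, z_{I_0^c})$. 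So the giving position can strictly dominate the receiving position by more than $\mu_{I_0} w_{I_0}(z_{I_0^c})$, and the transfer then decreases the larger entry---it is not realizable by an unjust transfer even up to a transposition. Your proposed fix---ordering by the subset lattice---is only gestured at, and it is not clear it closes this gap: flipping a larger $I_1\supseteq I_2$ first adds mass at $(1^{I_2}, z_{I_2^c})$ only when the background string $z$ also equals $1$ on $I_1\cap I_2^c$, so the needed dominance still does not hold uniformly in $z_{I_2^c}$.

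The paper avoids this problem by aggregating differently: it flips one \emph{coordinate} at a time across all subsets simultaneously, rather than one \emph{subset} at a time. Concretely, it decomposes the alignment operator along the first qudit as $Q\otimes B + \ketbra{1}\otimes C + \sum_{j\geq 2}\ketbra{j}\otimes D_j$, where $D_j$ collects all terms whose string has $j$ in the first position. The elementary move then transfers, for each background string $r$, the aggregated amount $d_{j,r}$ from position $(j,r)$ to $(1,r)$. The comparison needed for unjustness is then clean: the pair goes from $(q_1 b_r + c_r,\; q_j b_r + d_{j,r})$ to $(q_1 b_r + c_r + d_{j,r},\; q_j b_r)$, and $q_1\geq q_j$ immediately gives that the larger entry does not decrease. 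Each move only ever shifts mass between computational-basis positions differing in a single coordinate, the intermediate operators are still valid classical alignment operators, and iterating over coordinates finishes the proof. If you want to salvage your per-$I$ strategy you would need to supply the missing dominance argument; switching to the coordinate-wise aggregation is the cleaner route.
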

\begin{proof}
    By the same argument as in Lem.~\ref{lem: pure states sufficiency}, without loss of generality, the state tuple $(\rho_{I})_{I \subseteq [n]}$ may be taken to consist of pure classical states. For $I \subseteq [n]$, we denote $\rho_{I} = \ket{t_{I}}\bra{t_{I}}$, where $t_{I}$ is a string of length $|I|$ over the alphabet $[d]$. For example, $ \ket{1 2 3}\bra{1 2 3}  := \ket{q_{1}}\bra{q_{1}} \otimes  \ket{q_{2}}\bra{q_{2}} \otimes  \ket{q_{3}}\bra{q_{3}}$. 
    
    The idea of the proof is to start with an arbitrary classical assignment $(t_{I})_{I \subseteq [n]}$ and flip to 1 sequentially to improve the assignment making use of the transitivity of the majorization ordering. We start by considering the first qudit. We write the alignment operator in the form 
    \begin{align}
    \label{eq: decompostion of alignment operator}
        \sum_{I \subseteq [n]} \mu_{I} \ket{t_{I}}\bra{t_{I}} \otimes Q^{\otimes I^{c}} = Q \otimes B + \ket{1}\bra{1} \otimes C + \sum_{j = 2}^{d} \ket{j}\bra{j} \otimes D_{j}, 
    \end{align}
    where \(B\), \(C\), and \(D_{j}\) can be expressed as
    \begin{align}
        B &= \sum_{I \subseteq [n] \setminus \{1\} } \nu_{I} \ket{t'_{I}} \bra{t'_{I}} \otimes Q^{I^{c}},\\
        C &=  \sum_{I \subseteq [n] \setminus \{1\} } w_{I} \ket{t''_{I}} \bra{t''_{I}} \otimes Q^{I^{c}}, \\
        D_{j} &=  \sum_{I \subseteq [n] \setminus \{1\} } \kappa_{j,I} \ket{t_{j,I}} \bra{t_{j,I}} \otimes Q^{I^{c}}, \label{lem:overlapDj}
    \end{align}
    where the summand $Q\otimes B$ accounts for all $t_{I}$ with
      $1\not\in I$, $\ketbra{1}\otimes C$ accounts for all $t_{I}$
      with $1\in I$ and first letter $1$ (the ``good'' $t_{I}$)
      and the $D_{j}$ account for the rest. The measures $\nu, w,$ and $\kappa_{j}$
    are positive, and the operators $B, C,$ and $D_{j}$ are positive
    semi-definite and classical. We consider each
    \(j \in [d] \setminus \{1\}\) in sequence and flip the first
    entries of those \(t_{I}\) contributing to $D_{j}$ from \(j\) to \(1\). This implements the following change
      in the classical assignment $(t_{I})_{I\in\subseteq[n]}$: For
      every $t_{I}$ of the form
      $ t_{I}= \ket{j}\bra{j} \otimes \ket{t_{j,I}} \bra{t_{j,I}}$
      contributing to the summand $\ket{j}\bra{j}\otimes D_{j}$,
      replace it with
      $\ket{1}\bra{1} \otimes \ket{t_{j,I}} \bra{t_{j,I}}$.  The
      representation in Eq.~\ref{eq: decompostion of alignment
        operator} changes according to the assignments
      $D_{1}\gets D_{1}+D_{j}$ and $D_{j}\gets 0$, so that the new
      alignment operator is compatible with $Q$ and $\mu$.
    For a string $r$ of length $n-1$ over $[d]$, let $b_{r}, c_{r},$ and $d_{j, r}$ be the eigenvalues for the eigenstate $\ket{r}$ of $B, C,$ and $D_{j}$ before the change. The change in the spectrum can be implemented by composing transfers of the form: $d_{j, r}$ is taken from the eigenvalue of $\ket{j}\otimes\ket{r}$ and given to the eigenvalue of $\ket{1} \otimes \ket{r}$ for each string $r$. These transfers may be implemented in parallel. For a string $r$, those two entries in the spectrum before the change are
      \begin{align}
      \label{eq: old entires}
      (q_{1} b_{r} + c_{r},  q_{j} b_{r} + d_{j,r}),  
      \end{align}
      while after, they are
      \begin{align}
      \label{eq: new entires}
      (q_{1} b_{r} + c_{r} + d_{j, r}, q_{j} b_{r}). 
      \end{align}
      To see that the spectrum after the transfer majorizes the one before it, observe that
    \begin{align}
        q_{1} b_{r} + c_{r} + d_{j,r} \geq \max{(q_{1} b_{r} + c_{r},q_{j} b_{r} + d_{j,r})}. 
    \end{align}
    That is, the larger value of the pair did not decrease after the transfer. Hence, the transfer is either unjust or its composition with a transposition is unjust. By transitivity of majorization, the same process may now be done to the remaining qudits. \end{proof}

\subsection*{C. Schatten norms of integer order}

We prove the spin alignment conjecture for Schatten norms of integer order. By extension, this implies that any monotonic function of these norms, such as the corresponding Rényi entropies, is optimized at the conjectured optimal state tuple $(\ket{q_{1}}\bra{q_{1}}^{\otimes I})_{I \subseteq [n]}$. The proof relies on the fact that overlaps between summands in an alignment operator are simultaneously maximized at the conjectured optimal state tuple. Here, overlap means the absolute value of the trace of any product of these summands. In fact, the overlap lemma Lem.~\ref{overlap lemma} in App.~\ref{sec:app A} establishes the stronger statement that every unitarily invariant norm of such a product is maximized at $(\ket{q_{1}}\bra{q_{1}}^{\otimes I})_{I \subseteq [n]}$.  

\begin{cor} 
\label{cor: ovelap of absolute value of trace}
     Let $I_{1},\ldots,I_{\ell}$ be a family of subsets of $[n]$. For any tuple $(R_{I_{i}})_{i=1}^{\ell}$ of operators each with trace norm at most 1, the following inequality holds
    \begin{align} 
    \label{ineq: absolute value of trace}
        \left| \tr( \;\prod_{i=1}^{\ell}  R_{I_{i}} \otimes Q^{\otimes I_{i}^{c}}) \right| \leq   \tr( \;\prod_{i=1}^{\ell} \ket{q_{1}}\bra{q_{1}}^{\otimes I_{i}} \otimes Q^{\otimes I_{i}^{c}}). 
    \end{align}
\end{cor}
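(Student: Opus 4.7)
The plan is to derive this as a short corollary of the overlap lemma (Lem.~\ref{overlap lemma}), proved in App.~\ref{sec:app A}, which asserts that for every unitarily invariant norm $|||\cdot|||$, the product $\prod_{i=1}^{\ell} R_{I_i}\otimes Q^{\otimes I_i^c}$ is bounded in $|||\cdot|||$-norm by the analogous product obtained by replacing each $R_{I_i}$ with $\ket{q_1}\bra{q_1}^{\otimes I_i}$, provided each $R_{I_i}$ has trace norm at most $1$. Granting this lemma, three elementary facts suffice: the Schatten $1$-norm $\|\cdot\|_1$ is unitarily invariant; $|\tr(A)|\le\|A\|_1$ for every operator $A$; and for the conjectured maximizer the trace norm of the product coincides with its trace.

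Putting these together, I would chain the inequality $|\tr(A)|\le\|A\|_1$ with the overlap lemma specialized to $|||\cdot|||=\|\cdot\|_1$, obtaining
\[
\Bigl| \tr\Bigl(\prod_{i=1}^{\ell} R_{I_{i}} \otimes Q^{\otimes I_{i}^{c}}\Bigr) \Bigr| \leq \Bigl\|\prod_{i=1}^{\ell} R_{I_{i}} \otimes Q^{\otimes I_{i}^{c}}\Bigr\|_{1} \leq \Bigl\|\prod_{i=1}^{\ell} \ket{q_{1}}\bra{q_{1}}^{\otimes I_{i}} \otimes Q^{\otimes I_{i}^{c}}\Bigr\|_{1}.
\]
It then remains to argue that the rightmost quantity equals the trace of the same product. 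Each factor $\ket{q_{1}}\bra{q_{1}}^{\otimes I_{i}} \otimes Q^{\otimes I_{i}^{c}}$ is positive semi-definite and diagonal in the computational tensor basis, since $\ket{q_1}\bra{q_1}$ and $Q$ are both diagonal in $(\ket{q_i})_{i=1}^{d}$. A product of such diagonal positive operators is again diagonal with nonnegative entries, hence positive semi-definite, and so its trace norm equals its trace. Dropping the now-redundant absolute value yields the advertised inequality.

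The main obstacle, naturally, is the overlap lemma itself; the reduction sketched above is elementary. I would therefore defer the genuine technical content to the proof of Lem.~\ref{overlap lemma} in App.~\ref{sec:app A}, which is where the actual work resides. It is worth noting that the argument is slightly wasteful on the left-hand side: because $|\tr(A)|\le\|A\|_1$ is strict in general, the corollary will typically be looser than the specialization of the overlap lemma to the trace norm, but this loss is immaterial for the Schatten-norm application in the next subsection.
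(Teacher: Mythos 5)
Your proof is correct and matches the paper's own argument: both chain the bound $|\tr(A)|\le\tr(|A|)=\|A\|_{1}$ with the overlap lemma specialized to the trace norm, then observe that the right-hand side is the trace of a positive semi-definite operator. Your explicit justification of the final equality (each factor is positive and diagonal in the computational basis, so the product is too) fills in a step the paper leaves implicit.
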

\begin{proof}
The result follows from an application of Lem.~\ref{overlap lemma} in the following chain of inequalities:
  \begin{align}
\label{ineq: chain of estimates for trace of absolute value.}
    \left| \tr( \;\prod_{i=1}^{\ell} R_{I_{i}} \otimes Q^{\otimes I_{i}^{c}})\right| &\leq \tr( \left| \;\prod_{i=1}^{\ell} R_{I_{i}} \otimes Q^{\otimes I_{i}^{c}}\right|) \nonumber\\
    &\leq \tr( \left|\;\prod_{i=1}^{\ell} \ket{q_{1}}\bra{q_{1}}^{\otimes I_{i}} \otimes Q^{\otimes I_{i}^{c}}\right|) \hspace{.25in} \textrm{by Lem.~\ref{overlap lemma}}\nonumber\\
    &=  \tr( \;\prod_{i=1}^{\ell} \ket{q_{1}}\bra{q_{1}}^{\otimes I_{i}} \otimes Q^{\otimes I_{i}^{c}}). 
\end{align} \end{proof}

\begin{thm}
\label{thm: spin alignment for Schatten norms of integer order}
For arbitrary $m \in \mathbb{N}$, the following inequality holds
\begin{align}
    || \sum_{I \subseteq [n]} \mu_{I} \rho_{I} \otimes Q^{\otimes I^{c}} ||_{m} \leq || \sum_{I \subseteq [n]} \mu_{I} \ket{q_{1}}\bra{q_{1}}^{\otimes I} \otimes Q^{\otimes I^{c}} ||_{m}
\end{align}
\end{thm}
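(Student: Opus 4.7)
The plan is to reduce the claim to a termwise estimate via the polynomial expansion of $\tr(A^{m})$ and then apply Cor.~\ref{cor: ovelap of absolute value of trace}. Let $A := \sum_{I \subseteq [n]} \mu_{I} \rho_{I} \otimes Q^{\otimes I^{c}}$ and $A_{\star} := \sum_{I \subseteq [n]} \mu_{I} \ket{q_{1}}\bra{q_{1}}^{\otimes I} \otimes Q^{\otimes I^{c}}$. Both operators are positive semi-definite, so for integer $m$ one has $\|A\|_{m}^{m} = \tr(A^{m})$ and $\|A_{\star}\|_{m}^{m} = \tr(A_{\star}^{m})$. Thus it suffices to show $\tr(A^{m}) \leq \tr(A_{\star}^{m})$ and then take $m$-th roots.

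The key step is to expand the $m$-th power multilinearly,
\begin{equation*}
\tr(A^{m}) = \sum_{I_{1}, \ldots, I_{m} \subseteq [n]} \mu_{I_{1}} \cdots \mu_{I_{m}} \; \tr\!\left( \prod_{i=1}^{m} \rho_{I_{i}} \otimes Q^{\otimes I_{i}^{c}}\right),
\end{equation*}
and to control each summand separately. Since the $\rho_{I_{i}}$ are states, they have trace norm $1$, so Cor.~\ref{cor: ovelap of absolute value of trace} applies with $R_{I_{i}} = \rho_{I_{i}}$ and yields
\begin{equation*}
\left| \tr\!\left( \prod_{i=1}^{m} \rho_{I_{i}} \otimes Q^{\otimes I_{i}^{c}}\right)\right| \;\leq\; \tr\!\left( \prod_{i=1}^{m} \ket{q_{1}}\bra{q_{1}}^{\otimes I_{i}} \otimes Q^{\otimes I_{i}^{c}}\right).
\end{equation*}
Crucially, the factors appearing on the right-hand side all commute, because they are simultaneously diagonal in the computational basis; hence their product is positive semi-definite and the right-hand side is a nonnegative real number (so no absolute-value signs are needed there).

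Applying this bound termwise and using the nonnegativity of the weights $\mu_{I_{1}}\cdots\mu_{I_{m}}$ yields
\begin{equation*}
\tr(A^{m}) \;\leq\; \sum_{I_{1}, \ldots, I_{m} \subseteq [n]} \mu_{I_{1}} \cdots \mu_{I_{m}} \; \tr\!\left( \prod_{i=1}^{m} \ket{q_{1}}\bra{q_{1}}^{\otimes I_{i}} \otimes Q^{\otimes I_{i}^{c}}\right) = \tr(A_{\star}^{m}),
\end{equation*}
where the equality simply reverses the multilinear expansion, this time for $A_{\star}$. Taking the $m$-th root finishes the proof.

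\textbf{Where the work lies.} All the real content has been packaged into the overlap lemma and Cor.~\ref{cor: ovelap of absolute value of trace}; once those are available, the argument above is essentially bookkeeping. The subtle point worth checking explicitly is the commutation of the summands of $A_{\star}^{m}$ in the computational basis, which justifies dropping the absolute value on the right-hand side and hence permits the reconstruction of $\tr(A_{\star}^{m})$ from the summand-wise bounds. The main obstacle, therefore, is not in this theorem itself but in establishing the overlap lemma; assuming that, the extension to arbitrary unitarily invariant norms (beyond integer Schatten norms) does not follow from this argument, because $\tr(A^{m})$-type expansions are unavailable for noninteger $m$ or for Fan norms.
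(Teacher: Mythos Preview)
Your proposal is correct and follows essentially the same route as the paper's proof: expand $\|A\|_{m}^{m}=\tr(A^{m})$ multilinearly, bound each trace term via Cor.~\ref{cor: ovelap of absolute value of trace}, and use that at the conjectured optimum all the resulting terms are nonnegative (the paper phrases this as ``equal phase'') so that the termwise bounds reassemble into $\tr(A_{\star}^{m})$. Your write-up is in fact more explicit than the paper's, particularly in spelling out why the right-hand side terms are nonnegative via commutativity in the computational basis.
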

 \begin{proof}
   Notice that $ || \sum_{I \subseteq [n]} \mu_{I} \rho_{I} \otimes Q^{\otimes I^{c}} ||_{m}^{m}$ is a linear combination of traces of
   products of operators of the form \(\rho_{I} \otimes Q^{\otimes I^{c}}\) for some $I \subseteq [n]$. By Corollary \ref{cor: ovelap of absolute value of trace}, the absolute value of each term is maximized at
   $(\ket{q_{1}}\bra{q_{1}}^{\otimes I})_{I \subseteq [n]}$. By the triangle inequality for the absolute value and the fact that the terms in the combination have equal phase at $(\ket{q_{1}}\bra{q_{1}}^{\otimes I})_{I \subseteq [n]}$, 
   \begin{align}
        || \sum_{I \subseteq [n]} \mu_{I} \rho_{I} \otimes Q^{\otimes I^{c}} ||_{m}^{m} \leq || \sum_{I \subseteq [n]} \mu_{I} \ket{q_{1}}\bra{q_{1}}^{\otimes I} \otimes Q^{\otimes I^{c}} ||_{m}^{m}.
   \end{align}
   The claim follows from the fact that $m$th root function is monotonically non-decreasing on $[0,\infty)$. \end{proof}

We note here that generalizing this result to Schatten norms of arbitrary order would imply the original spin alignment conjecture as stated in \cite{Leditzky2022a}. However, even that would not be sufficient for proving Conj.~\ref{conj: strong spin alignment conejcture} as can be seen from the equivalence of catalytic majorization and an infinite set of inequalities for (generalized) Rényi entropies \cite{klimesh2007}.

\subsection*{D. The case where $ \text{supp} (\mu) \leq 2$ and $Q$ is proportional to a projector}
\label{subsec: two projectors}

We now prove the strong spin alignment conjecture for alignment operators of the form
\begin{align} 
\label{spin alignment with mu 2}
(1 - t) \ket{\psi_{{1}}}\bra{\psi_{{1}}}_{I_{1}} \otimes Q^{\otimes I_{1}^{c}} + t \ket{\psi_{{2}}}\bra{\psi_{{2}}}_{I_{2}}  \otimes Q^{\otimes I_{2}^{c}}, 
\end{align}
where $t \in [0,1]$, $I_{1}, I_{2} \subseteq [n]$, and $Q$ is proportional to a projector. Since we have established that we may take $I_{1} \cup I_{2} = [n]$ without loss of generality (see Lem.~\ref{prop: union of mu support is maximal}), the cases where $I_{1} \subseteq I_{2}$ or $I_{1} \supseteq I_{2}$ are simply resolved by Eq.~\ref{eq: Fan majorization rel self-adjoint}. To address the other cases, we consider the more general problem of the alignment of two projectors subject to a certain overlap constraint.

Let $\mathcal{K}$ be given. For each tuple $(r_{1}, r_{2}, c)$, where $r_{1}, r_{2} \in [d_{\mathcal{K}}] \cup \{0\}$ and $c \geq 0$, let $G_{r_{1}, r_{2}, c}$ denote the set of pairs of projectors $(P_1, P_2)$ satisfying the conditions:  $\text{rank}(P_{1}) = r_{1}$, $\text{rank}(P_{2}) = r_{2}$, and $\tr(|P_{1} P_{2}|) \leq c$.

\begin{lem}
\label{lem: non-empty criterion for feasible sets.}
$G_{r_{1}, r_{2}, c}$ is nonempty if and only if 
$r_{1} + r_{2} - d_{\mathcal{K}} \leq \lfloor c \rfloor$.
\end{lem}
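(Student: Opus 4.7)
The plan is to characterize $\tr(|P_1 P_2|)$ geometrically as the sum of cosines of principal angles between $V_1 := \mathrm{range}(P_1)$ and $V_2 := \mathrm{range}(P_2)$, and then exploit the fact that the contribution from the intersection $V_1 \cap V_2$ is an integer that depends on dimension counts.

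For the necessity direction, I would first observe that for any unit vector $v \in V_1 \cap V_2$, we have $P_2 P_1 P_2 v = v$, so $1$ is an eigenvalue of $|P_1 P_2|^2 = P_2 P_1 P_2$ with multiplicity at least $\dim(V_1 \cap V_2)$. Since all singular values of $P_1 P_2$ lie in $[0,1]$, this gives
\begin{equation*}
\tr(|P_1 P_2|) \;\geq\; \dim(V_1 \cap V_2).
\end{equation*}
Combining with the standard dimension identity $\dim(V_1)+\dim(V_2)=\dim(V_1+V_2)+\dim(V_1\cap V_2)\leq d_{\mathcal{K}}+\dim(V_1\cap V_2)$, I conclude $\tr(|P_1 P_2|) \geq r_1 + r_2 - d_{\mathcal{K}}$. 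Since the left side is bounded by $c$ and the right side is an integer, this forces $r_1+r_2-d_{\mathcal{K}} \leq \lfloor c\rfloor$, as required.

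For the sufficiency direction, I split into two cases. If $r_1+r_2 \leq d_{\mathcal{K}}$, I simply choose $V_1$ and $V_2$ to be orthogonal subspaces of the prescribed ranks; then $P_1 P_2 = 0$ and the condition $\tr(|P_1 P_2|) = 0 \leq c$ is trivially met. Otherwise set $k := r_1+r_2-d_{\mathcal{K}} \geq 1$, and pick any subspace $U$ of dimension $k$. In $U^\perp$, which has dimension $d_{\mathcal{K}}-k = (r_1-k)+(r_2-k)$, I choose mutually orthogonal subspaces $A$ and $B$ of dimensions $r_1-k$ and $r_2-k$ respectively, and set $V_1 := U\oplus A$ and $V_2 := U\oplus B$. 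A direct computation gives
\begin{equation*}
P_1 P_2 \;=\; (P_U+P_A)(P_U+P_B) \;=\; P_U,
\end{equation*}
since $P_A P_U = P_U P_B = P_A P_B = 0$ by construction. Thus $|P_1 P_2| = P_U$ and $\tr(|P_1 P_2|) = k \leq \lfloor c\rfloor \leq c$, so $(P_1,P_2) \in G_{r_1,r_2,c}$.

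I do not anticipate a serious obstacle: the key structural fact is the integer lower bound $\tr(|P_1 P_2|) \geq \dim(V_1 \cap V_2)$, which naturally produces the floor function in the hypothesis, and the matching construction is an explicit ``minimal intersection'' configuration in which the off-intersection pieces are chosen orthogonal so that no principal angle other than the forced zero ones contributes. The only mild care needed is verifying that $A$ and $B$ can be chosen orthogonal inside $U^\perp$, which is possible precisely because $(r_1-k)+(r_2-k) = d_{\mathcal{K}}-k = \dim(U^\perp)$.
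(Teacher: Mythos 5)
Your proof is correct and takes essentially the same approach as the paper's: the necessity direction rests on the same observation that the intersection of the two ranges has dimension at least $r_1 + r_2 - d_{\mathcal{K}}$ and forces $\tr(|P_1P_2|)$ to be at least that integer, and the sufficiency direction uses the same explicit construction $P_1 = P_U + P_A$, $P_2 = P_U + P_B$ with $A \perp B$, $A, B \perp U$, yielding $P_1 P_2 = P_U$.
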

\begin{proof}
Suppose $r_{1} + r_{2} - d_{\mathcal{K}} > \lfloor c \rfloor$. If $P_{1}$ and $P_{2}$ are projectors satisfying $\text{rank} (P_{1}) = r_{1}$ and $\text{rank} (P_{2}) = r_{2}$, then from basic linear algebra 
\begin{align}
\label{ineq: basic linear algebra dim rel}
    \dim( \text{supp}(P_{1}) \cap \text{supp}(P_{2})) \geq r_{1} + r_{2} - d_{\mathcal{K}} > \lfloor c \rfloor.
\end{align}
Hence, $P_{1} P_{2}$ acts as the identity on a subspace of dimension at least $\lfloor c \rfloor + 1$ and so $\tr(|P_{1} P_{2}|) > c$. This implies $G_{r_{1}, r_{2}, c}$ is empty.

If on the other hand $r_{1} + r_{2} - d_{\mathcal{K}} \leq \lfloor c \rfloor$, let $P$ be a projector of rank equal to $\max(r_{1} + r_{2} - d_{\mathcal{K}}, 0)$. Consider the two projectors $P_{1} = P + R_{1}$ and $P_{2} = P + R_{2}$, where $R_{1}, R_{2}$ are projectors satisfying $R_{1} \perp P, R_{2} \perp P, R_{1} \perp R_{2}$, and $\text{rank}(R_{1}) = r_{1} - \text{rank}(P), \text{rank}(R_{2}) = r_{2} - \text{rank}(P)$. Then $P_{1} P_{2} = P$, $\tr(|P_{1} P_{2}|) = \tr(P) \leq \lfloor c\rfloor$, and so $(P_{1}, P_{2}) \in G_{r_{1}, r_{2}, c}$. \end{proof}

From now on, we take for granted that $r_{1} + r_{2} - d_{\mathcal{K}} \leq \lfloor c \rfloor$. Moreover, since $\tr(|P_{1} P_{2}|) \leq \min(\tr(P_{1}), \tr(P_{2}))$ for any two projectors $P_{1}$ and $ P_{2}$, we take $c \leq \min(r_{1}, r_{2})$.  Otherwise, the constraint on the product may as well not be considered.

The crucial ingredient of our analysis is a characterization of the angle between two subspaces due to Camille Jordan \cite{Jordan1875}. This characterization, commonly referred to as Jordan's lemma, allows us to reason about the relationship between the eigenvalues of a sum of two projectors and the singular values of their product. 
\begin{thm} 
\label{thm: two projector fan norm}
There exists $(P^{\text{opt}}_{1}, P^{\text{opt}}_{2}) \in G_{r_{1}, r_{2} ,c}$ that is maximal in the alignment ordering on $G_{r_{1}, r_{2} ,c}$, and it satisfies $\tr(|P^{\text{opt}}_{1} P^{\text{opt}}_{2}|) = c$. 
Moreover, if $c \in \mathbb{N}$, then $[P^{\text{opt}}_{1}, P^{\text{opt}}_{2}] = 0$.
\end{thm}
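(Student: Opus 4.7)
The plan is to use Jordan's lemma to reduce to a blockwise problem, apply a convexity argument to show that any pair in $G_{r_1, r_2, c}$ can be improved in alignment until it has at most one nontrivial $2$-dimensional block, and to read off the commutativity statement from that structure. By Jordan's lemma, any pair $(P_1, P_2)$ of projectors on $\mathcal{K}$ admits a simultaneous block-diagonalization into $1$- and $2$-dimensional mutually invariant subspaces: each $2$-dimensional block has $P_1 = \ket{u}\bra{u}$ and $P_2 = \ket{v}\bra{v}$ with $|\braket{u}{v}| = \cos\theta_k \in (0, 1)$, and each $1$-dimensional block is of one of four types labeled by the values $0$ or $1$ taken by each of $P_1, P_2$. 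Writing $a$ for the number of common-support $1$-dimensional blocks and $\theta_1, \ldots, \theta_{n_2}$ for the nontrivial angles, one has $\tr(|P_1 P_2|) = a + \sum_k \cos\theta_k$, and the spectrum of $(1-p)P_1 + pP_2$ consists of values in $\{1, 1-p, p, 0\}$ from the $1$-dimensional blocks together with the pairs $\lambda_{\pm}(\theta_k, p) = \tfrac{1}{2}\bigl(1 \pm \sqrt{1 - 4p(1-p)\sin^2\theta_k}\bigr)$ from the $2$-dimensional blocks.

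The heart of the argument is a convexity lemma: for every fixed $p$, $\lambda_+$ is a convex function of $\cos\theta$ (a direct second-derivative calculation produces a factor $(1-2p)^2 \geq 0$). Using this I would show that a pair with two nontrivial $2$-dimensional blocks of angles $\theta_1, \theta_2$ can be replaced, preserving both ranks and $\tr(|P_1 P_2|)$, by a strictly more aligned pair. When $\cos\theta_1 + \cos\theta_2 \leq 1$, merge the two into a single $2$-dimensional block with $\cos\theta' = \cos\theta_1 + \cos\theta_2$ together with one $(1,0)$ and one $(0,1)$ $1$-dimensional block; when $\cos\theta_1 + \cos\theta_2 > 1$, split off a common-support block and leave a $2$-dimensional block with $\cos\theta' = \cos\theta_1 + \cos\theta_2 - 1$ together with a $(0,0)$ block. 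Verifying that the new spectrum majorizes the old for every $p$ reduces to at most three partial-sum inequalities (the first and last being simply $\lambda_+(\theta')\geq \lambda_+(\theta_1)$), each of which follows from convexity of $\lambda_+$ in $\cos\theta$ via the fact that, subject to fixed sum, a convex function attains its maximum at the extremes of the feasible region. An analogous alignment-improving rewrite that exchanges a $(1,0)$--$(0,1)$ pair for a common-support--$(0,0)$ pair (or tunes the single remaining $2$-dimensional block) then pushes the overlap up to saturate the bound $c$.

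Iterating these moves yields the explicit optimum: take $\lfloor c \rfloor$ common-support $1$-dimensional blocks, one $2$-dimensional block with $\cos\theta^* = c - \lfloor c \rfloor$ when $c \notin \mathbb{N}$, and fill the remaining dimensions with $(1,0)$, $(0,1)$, and $(0,0)$ $1$-dimensional blocks in the counts dictated by $r_1$, $r_2$, and $d_{\mathcal{K}}$. The nonemptyness criterion of Lem.~\ref{lem: non-empty criterion for feasible sets.} is exactly what guarantees enough $(0,0)$ slots are available, and by construction $\tr(|P_1^{\text{opt}} P_2^{\text{opt}}|) = c$. When $c \in \mathbb{N}$ the fractional $2$-dimensional block drops out, so $P_1^{\text{opt}}$ and $P_2^{\text{opt}}$ are simultaneously diagonal in the Jordan basis and hence $[P_1^{\text{opt}}, P_2^{\text{opt}}] = 0$. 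The main obstacle I anticipate is ensuring the merging step increases alignment uniformly in $p \in [0,1]$; the convexity of $\lambda_+$ in $\cos\theta$ is what makes this uniform, and Prop.~\ref{prop: T-transforms and unjust transfers are equivalent} lets each elementary alignment-increasing move be realized as a composition of unjust transfers on sorted eigenvalue vectors.
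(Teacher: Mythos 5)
Your proof is correct and follows essentially the same route as the paper: Jordan's lemma reduces to $1$- and $2$-dimensional invariant blocks, and the convexity of $\cos\theta \mapsto \lambda_\pm(\theta,p)$ (the same $\sqrt{a+bx^2}$ fact that underlies the paper's Lem.~\ref{lemma: isotone}) combines with the majorant structure identified in Lem.~\ref{lemma: majorants} to concentrate the overlap onto $\lfloor c \rfloor$ common-support lines plus at most one residual $2$-dimensional block. The only difference is organizational---you apply the majorization-improving step pairwise rather than in one shot via Lem.~\ref{lemma: majorants} followed by Lem.~\ref{lemma: isotone}---which is immaterial.
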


\begin{proof}
  It suffices to show that there exists $(P^{\text{opt}}_{1}, P^{\text{opt}}_{2}) \in G_{r_{1}, r_{2} ,c}$ such that 
  \begin{align}
      s_{1} P_{1}^{\text{opt}} + s_{2} P_{2}^{\text{opt}}  \succeq s_{1} P_{1} + s_{2} P_{2} 
  \end{align}
  holds for all $s_{1}, s_{2} \geq 0$ and $(P_{1}, P_{2}) \in G_{r_{1}, r_{2} ,c}$. We show existence by explicit construction. Let arbitrary $s_{1}, s_{2} \geq 0$ be given and consider an arbitrary pair $(P_{1},P_{2}) \in G_{r_{1}, r_{2},c}$. We modify the pair such that it remains in $G_{r_{1}, r_{2} ,c}$ and the corresponding Fan norms do not decrease until we reach an optimal pair.

We can decompose \(\mathcal{K}\) into minimal subspaces invariant under both \(P_{1}\) and \(P_{2}\). According to Jordan's lemma, these subspaces are either one- or two-dimensional. Thus, we may write
$\mathcal{K} = \mathcal{K}^{(1)} \oplus
\mathcal{K}^{(2)}$, where $\mathcal{K}^{(1)} = \bigoplus_{i_{1} = 1}^{m_{1}}
\mathcal{K}_{i_{1}}^{(1)}$ is the direct sum of invariant subspaces of dimension 1,
and $\mathcal{K}^{(2)} = \bigoplus_{i_{2} = 1}^{m_{2}}
\mathcal{K}_{i_{2}}^{(2)}$ is the direct sum of invariant subspaces of
dimension 2. Moreover, for each $i_{2} \in
[m_{2}]$,
$P_{1}|_{\mathcal{K}_{i_{2}}}$ and
$P_{2}|_{\mathcal{K}_{i_{2}}}$ are projectors of rank
$1$. Denote the overlap $\tr (| P_{1} P_{2} |) = \tilde{c}_{1} + \tilde{c}_{2}$ where $\tilde{c}_{1} = \tr(|P_{1} P_{2}|_{\mathcal{K}^{(1)}} |) \;
\text{and} \; \tilde{c}_{2} = \tr(|P_{1} P_{2}|_{\mathcal{K}^{(2)}} |)$. Observe that
$\tilde{c}_{1}$ is an integer as it equals the dimension of
$\text{supp}(P_{1}|_{\mathcal{K}^{(1)}}) \cap
\text{supp}(P_{2}|_{\mathcal{K}^{(1)}})$.

First, we reduce to the case where $m_{2} \in \{0, 1\}$. For each $i_{2} \in [m_{2}]$, we notate
\begin{align}
    P_{1}|_{\mathcal{K}_{i_{2}}^{(2)}} =: \ket{\alpha_{i_{2}}}\bra{\alpha_{i_{2}}} ,    P_{2}|_{\mathcal{K}_{i_{2}}^{(2)}} =: \ket{\beta_{i_{2}}}\bra{\beta_{i_{2}}}.
\end{align}
Notice that $\tilde{c}_{2} = \sum_{i_{2} = 1}^{m_{2}} | \braket{\alpha_{i_{2}}|\beta_{i_{2}}}|$. By Lem.~\ref{lemma: majorants}, the vector $(| \braket{\alpha_{i_{2}}|\beta_{i_{2}}}|)_{i_{2} = 1}^{m_{2}}$ is majorized by the vector $(\underbrace{1, \ldots, 1}_{\lfloor \tilde{c}_{2} \rfloor\; \text{times}}, \tilde{c}_{2} - \lfloor \tilde{c}_{2} \rfloor, 0, \ldots,0)$. For each $i_{2} \in [m_{2}]$, there are $\ket{\alpha_{i_{2}}'}, \ket{\beta_{i_{2}}'} \in \mathcal{K}_{i_{2}}^{(2)}$, such that 
\begin{align}
\label{eq: majorizing tuple}
(| \braket{\alpha_{i_{2}}'| \beta_{i_{2}}'}|)_{i_{2}=1}^{m_{2}} = (\underbrace{1, \ldots, 1}_{\lfloor \tilde{c}_{2} \rfloor\; \text{times}}, \tilde{c}_{2} - \lfloor \tilde{c}_{2} \rfloor, 0, \ldots,0).
\end{align}
 Specifically, for each $1$ on the right-hand side we choose the two vectors to be equal and for each $0$ we choose them to be orthogonal. For the remaining entry, we pick the two such that they have overlap $\tilde{c}_{2} - \lfloor \tilde{c}_{2} \rfloor$. By Lem.~\ref{lemma: isotone}, the eigenvalues of $(s_{1} P_{1} + s_{2} P_{2}) |_{\mathcal{K}^{(2)}}$ are a strictly isotone function of the nonzero singular values of $P_{1} P_{2}|_{\mathcal{K}^{(2)}}$, and so we may make the reassignments
\begin{align}
    \ket{\alpha_{i_{2}}} \gets \ket{\alpha_{i_{2}}'} \; , \;
    \ket{\beta_{i_{2}}} \gets \ket{\beta_{i_{2}}'}
\end{align}
for each $i_{2} \in [m_{2}]$. Save for at most one, this replaces each $\mathcal{K}_{i_{2}}^{(2)}$ with two one-dimensional invariant subspaces. That is, we may assume without loss of generality that $m_{2} \in \{0,1\}$ and that $\tilde{c}_{2} < 1$.

Second, we argue that $\tilde{c}_{1}$ may be taken to be at least $\lceil c \rceil - 1$ without loss of generality. Suppose that $\tilde{c}_{1} < \lceil c \rceil - 1$. Recall that the tuple of eigenvalues $\lambda((s_{1} P_{1} + s_{2} P_{2})|_{\mathcal{K}^{(1)}})$ contains $s_{1} + s_{2}$ with multiplicity $\tilde{c}_{1}$ along with the smaller eigenvalues $s_{1}, s_{2}$ and $0$. The number of one-dimensional invariant subspaces with eigenvalue \(s_{1}\) is at least $(r_1-1) - \tilde{c}_{1}$. Similarly, The number of one-dimensional invariant subspaces with eigenvalue \(s_2\) is at least $(r_2-1) - \tilde{c}_{1}$. The reason \(1\) is subtracted is to allow for the case where $m_2 = 1$. Since $\lceil c \rceil \leq \min(r_1, r_2)$, we may pair up $(\lceil c \rceil -1) - \tilde{c}_{1}$ many of these two kinds of invariant one-dimensional subspaces to get $(\lceil c \rceil -1) - \tilde{c}_{1}$ two-dimensional subspaces. We adjust the projectors so that they are equal when restricted to these subspaces. For each such subspace, this causes a transfer on the spectrum of $s_{1} P_{1} + s_{2} P_{2}$ of the form: 
\begin{align} 
 s_{1} \gets 0 \; , \; s_{2} \gets s_{1} + s_{2}.
\end{align}
 Since this is either an unjust transfer or an unjust transfer composed with a transposition, the Fan norms of $s_{1} P_{1} + s_{2} P_{2}$ do not decrease afterwards. Hence, we may, without loss of generality, take $\tilde{c}_1 \geq \lceil c \rceil - 1$. 

 Now, suppose $c$ is an integer. If $ m_2 = 1$, then we may adjust the action of the projectors on the sole invariant two-dimensional subspace so that they are equal when restricted to it. Then, $\tilde{c}_1$ becomes equal to $c$. If $m_{2} = 0$, then either $\tilde{c}_1 = c$ or there exists two invariant one-dimensional subspaces where the two projectors are nonzero and orthogonal. In the latter case, we may adjust the actions of the projectors on the direct sum of these two subspaces so that the two are equal there. In this case also $\tilde{c}_1$ becomes equal to $c$. In both cases, the arising pair
 $(P_{1}^{\text{opt}}, P_{2}^{\text{opt}})$ is a maximal pair in the alignment preorder. Notice that $\tr (|P_{1}^{\text{opt}} P_{2}^{\text{opt}}|) = c$ and $[P_{1}^{\text{opt}}, P_{2}^{\text{opt}}] = 0$.

 On the other hand, consider the case where $c$ is not an integer. Then, $\lceil c \rceil - 1 = \lfloor c \rfloor$. If $m_{2} = 1$, then we adjust the overlap in the two-dimensional invariant subspace until $\tilde{c}_{2} = c - \lfloor c \rfloor$.  That is, we make the projectors align there as much as possible. From Eq.~\ref{eq: eigenvalue equation for two rank1 projector sum}, this is either an unjust transfer or an unjust transfer composed with a transposition on the spectrum of \(s_1 P_1 + s_2 P_2\). If $m_{2} = 0$, then consider again two invariant one-dimensional subspaces where the two projectors are nonzero and orthogonal. 
 Such a pair exists because $\tilde{c}_1 = \lfloor c \rfloor < c \leq \min(r_1, r_2)$. Then, we adjust the action of the two projectors on the direct sum of these two subspaces so that $m_{2} = 1$ and $\tilde{c}_{2} = c -\lfloor c \rfloor$. Again, the pair arising after these adjustments $(P^{\text{opt}}_{1}, P^{\text{opt}}_{2})$ is a desired maximal pair.
\end{proof}

By Thm.~\ref{thm: two projector fan norm}, if $c \in \mathbb{N}$, then any pair of commuting projectors in $G_{r_{1}, r_{2} ,c}$ whose product has rank $c$ is optimal. By Lem.~\ref{overlap lemma}, the projectors in the alignment operator in Eq.~\ref{spin alignment with mu 2} have overlap at most $(\text{rank}(Q))^{|(I_{1} \cup I_{2})^{c}|}$.
\begin{cor} 
\label{corollary: two term projector fan norms optimality}
   For $t \in [0,1], I_{1}, I_{2} \subseteq [n]$ and $Q$ proportional to a projector,
   \begin{align*}
   \label{rel: alignment for two terms.}
   (1 - t) \ket{q_{1}}\bra{q_{1}}^{\otimes {I_{1}}} \otimes Q^{\otimes I_{1}^{c}} + t \ket{q_{1}}\bra{q_{1}}^{\otimes {I_{2}}} \otimes Q^{\otimes I_{2}^{c}} \succeq
   (1 - t) \ket{\psi_{{1}}}\bra{\psi_{{1}}}_{I_{1}} \otimes Q^{\otimes I_{1}^{c}} + t \ket{\psi_{{2}}}\bra{\psi_{{2}}}_{I_{2}}  \otimes Q^{\otimes I_{2}^{c}}.
   \end{align*}  
\end{cor}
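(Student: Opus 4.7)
The plan is to reduce the corollary to an application of Thm.~\ref{thm: two projector fan norm} by recasting both sides of the claimed majorization as weighted sums of two projectors. Since \(Q\) is proportional to a projector, write \(Q = P/r\) where \(P\) is the projector onto \(\mathrm{supp}(Q)\) and \(r = \mathrm{rank}(Q)\); because \(\ket{q_1}\) is a maximal eigenvector of \(Q\) and all nonzero eigenvalues coincide, \(P\ket{q_1} = \ket{q_1}\). Substituting \(Q = P/r\), the alignment operator on the right-hand side of the claim becomes \(s_1 \Pi_1 + s_2 \Pi_2\), where \(s_1 = (1-t)/r^{|I_1^c|}\), \(s_2 = t/r^{|I_2^c|}\), and \(\Pi_i = \ket{\psi_i}\bra{\psi_i}_{I_i} \otimes P^{\otimes I_i^c}\) is a projector of rank \(r_i := r^{|I_i^c|}\). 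The conjectured optimum is recast analogously as \(s_1 \Pi_1^{\mathrm{opt}} + s_2 \Pi_2^{\mathrm{opt}}\) with \(\Pi_i^{\mathrm{opt}} = \ket{q_1}\bra{q_1}^{\otimes I_i} \otimes P^{\otimes I_i^c}\).

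Next, invoke the overlap lemma (Lem.~\ref{overlap lemma}) to bound \(\tr(|\Pi_1 \Pi_2|) \leq c := r^{|(I_1 \cup I_2)^c|}\), placing \((\Pi_1, \Pi_2)\) in \(G_{r_1, r_2, c}\) for \(\mathcal{K} = (\mathbb{C}^d)^{\otimes n}\). Since \(c \in \mathbb{N}\), Thm.~\ref{thm: two projector fan norm} produces a pair in \(G_{r_1, r_2, c}\) that is maximal in the alignment ordering and consists of commuting projectors whose product has rank exactly \(c\); as noted in the paragraph immediately after that theorem, any pair in \(G_{r_1, r_2, c}\) having these two properties is similarly maximal. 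It therefore suffices to check that \((\Pi_1^{\mathrm{opt}}, \Pi_2^{\mathrm{opt}})\) satisfies the two properties.

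To verify, decompose \([n] = A \sqcup B \sqcup C \sqcup D\) with \(A = I_1 \cap I_2\), \(B = I_1 \setminus I_2\), \(C = I_2 \setminus I_1\), and \(D = (I_1 \cup I_2)^c\). On \(A\) both projectors equal \(\ket{q_1}\bra{q_1}\); on \(B\) they equal \(\ket{q_1}\bra{q_1}\) and \(P\), which commute since \(P\ket{q_1} = \ket{q_1}\); \(C\) is symmetric; on \(D\) both equal \(P\). Hence \([\Pi_1^{\mathrm{opt}}, \Pi_2^{\mathrm{opt}}] = 0\), and multiplying block by block gives \(\Pi_1^{\mathrm{opt}} \Pi_2^{\mathrm{opt}} = \ket{q_1}\bra{q_1}^{\otimes (I_1 \cup I_2)} \otimes P^{\otimes D}\), a projector of trace (hence rank) equal to \(r^{|D|} = c\). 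By the maximality criterion, \(s_1 \Pi_1^{\mathrm{opt}} + s_2 \Pi_2^{\mathrm{opt}} \succeq s_1 \Pi_1 + s_2 \Pi_2\), which, after undoing the scaling \(Q = P/r\), is exactly the stated majorization.

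The step I expect to be the crux is the overlap bound: one needs Lem.~\ref{overlap lemma} to conclude that \(\tr(|\Pi_1 \Pi_2|)\) is controlled by its value at the conjectured pair, where it saturates to \(c\). The remainder is essentially bookkeeping, enabled by the hypothesis that \(Q\) is proportional to a projector, which both turns the alignment operator into a weighted sum of two projectors and ensures \(\ket{q_1} \in \mathrm{supp}(P)\) so that \(\Pi_1^{\mathrm{opt}}\) and \(\Pi_2^{\mathrm{opt}}\) commute.
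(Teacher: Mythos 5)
Your proof is correct and takes essentially the same approach as the paper. The paper's own justification (in the text immediately preceding the corollary) is exactly your two-step argument: invoke the overlap lemma to bound $\tr(|\Pi_1\Pi_2|)$ by its value at the conjectured pair, namely $c = r^{|(I_1\cup I_2)^c|}$, and then apply the observation from Thm.~\ref{thm: two projector fan norm} that when $c\in\mathbb{N}$ any commuting pair in $G_{r_1,r_2,c}$ whose product has rank $c$ is maximal; you have simply filled in the bookkeeping (the rescaling $Q=P/r$, the block decomposition of $[n]$, and the verification that the conjectured pair commutes with product of rank exactly $c$).
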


\section{``Dual" problems to spin alignment and their generalizations}
\label{sec:``dual" problem}
We introduce a general class of conjectures that articulates the intuition that under a fixed global spectrum constraint, the least locally dispersed states are classical. The inspiration came from considering spin alignment problems with $Q \propto \mathds{1}$ and using Fan's maximum principle Eq.~\ref{eq: Fan maximum prinicple}. Specifically, for a positive measure $\nu$ and a state tuple $(\ket{\psi_{I}} \bra{\psi_{I}})_{I \subseteq [n]}$, we consider 

\begin{align}
    \max_{P: \text{rank}(P) = k} \tr( P_{[n]} \sum_{I \subseteq [n]} \nu_{I} \ket{\psi_{I}} \bra{\psi_{I}} \otimes \mathds{1}^{\otimes I^{c}}) &=    \max_{P: \text{rank}(P) = k} \sum_{I \subseteq [n]}  \nu_{I} \tr( P_{I} \ket{\psi_{I}} \bra{\psi_{I}}) \\ &\leq    \max_{P: \text{rank}(P) = k} \sum_{I \subseteq [n]} \nu_{I} \lambda_{1} (P_{I}).   
\end{align}

 When $(\ket{\psi_{I}} \bra{\psi_{I}})_{I \subseteq [n]} = (\ket{q_{1}}\bra{q_{1}}^{\otimes I})_{I \subseteq [n]}$, the inequality is satisfied with equality. This maximization problem may be
generalized by letting the optimization be over quantum states
with a prescribed global spectrum, and by replacing
$\lambda_{1}$ with any continuous, convex,
unitarily invariant function. Moreover, since the objective function is convex,
we may take the convex closure of the feasible set without loss of
generality. We formalize this below.

Let complex inner product spaces $\mathcal{H}_{1}, ..., \mathcal{H}_{n}$ be given and denote $\mathcal{H}_{I} = \bigotimes_{i \in I} \mathcal{H}_{i}$ for $I \subseteq [n]$. For a probability measure $p \in \mathbb{R}^{\Pi_{i \in [n]} d_{\mathcal{H}_{i}}}$, we consider the maximization problem:
\begin{align} 
\label{``dual" program conjecture}
    &\max_{\tau_{[n]} \in \mathcal{D} (\mathcal{H}_{[n]})} \quad \quad \sum_{I \subseteq [n]} f_{I} ( \tau_{I}), \\
    &\text{subject to:} \quad \quad \; \lambda( \tau_{[n]} ) \preceq p, 
\end{align}
where for all $I \subseteq [n]$, $f_{I}$ is continuous, convex and unitarily invariant on a domain containing $\mathcal{D} (\mathcal{H}_{I})$. We conjecture that there exists a classical optimal point. This is self-evident in the case where $p$ has only one nonzero value, as all the functions in the sum can be maximized simultaneously. More interesting are the cases where the optimization is over strictly mixed states. An example of such a problem in the quantum information literature appears in Ref.~\cite{Wilde2014}. Namely, Eq.~39 there gives an upper bound on the fidelity of entanglement generating pure codes in terms of a sum as above where for each $I \subseteq [n]$, $f_{I} (\cdot) \propto || \cdot ||_{\alpha}^{\alpha}$ for a fixed $\alpha \in [1,2]$.

Below, we show that this ``dual" conjecture holds in the special case where non-constant functions in $(f_{I})_{I \subseteq [n]}$ correspond to members of a partition of $[n]$. 
\begin{prop} 
\label{``dual" program with part}
    Let $\mathcal{W} \subseteq 2^{[n]}$ be a partition of $[n]$. The problem
    \begin{align} 
    \label{``dual" program statement for partitioning}
        &\max_{\tau_{[n]} \in \mathcal{D} (\mathcal{H}_{[n]})} \quad \quad \sum_{I \in \mathcal{W}} f_{I} ( \tau_{I}), \\
        &\text{subject to:} \quad \; \;  \lambda( \tau_{[n]} ) \preceq p. 
    \end{align}
    has a classical solution. 
\end{prop}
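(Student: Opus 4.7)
The plan is to exploit two features of the partition structure. First, because the parts of $\mathcal{W}$ are pairwise disjoint, the Hilbert space decomposes as $\mathcal{H}_{[n]} = \bigotimes_{I \in \mathcal{W}} \mathcal{H}_{I}$, and any product unitary $U = \bigotimes_{I \in \mathcal{W}} U_{I}$ acts on the marginal over $I$ as $\tau_{I} \mapsto U_{I} \tau_{I} U_{I}^{*}$. Since each $f_{I}$ is unitarily invariant, the objective $\sum_{I \in \mathcal{W}} f_{I}(\tau_{I})$ is invariant under such conjugations. Second, the pinching channel $\Delta$ in the full computational basis---which, under the partition assumption, is a product basis aligned with the bipartition $I \mid I^{c}$ for every $I \in \mathcal{W}$---commutes with every partial trace $\tr_{I^{c}}$, giving $\Delta(\tau_{[n]})_{I} = \Delta_{I}(\tau_{I})$ where $\Delta_{I}$ denotes the pinching on $\mathcal{H}_{I}$.

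Given an arbitrary feasible $\tau_{[n]}$, I first choose unitaries $U_{I}$ on each $\mathcal{H}_{I}$ that diagonalize $\tau_{I}$ in the local computational basis of $\mathcal{H}_{I}$. Replacing $\tau_{[n]}$ by $\bigl(\bigotimes_{I \in \mathcal{W}} U_{I}\bigr)\tau_{[n]}\bigl(\bigotimes_{I \in \mathcal{W}} U_{I}\bigr)^{*}$ preserves the global spectrum---so $\lambda(\tau_{[n]}) \preceq p$ still holds---and leaves every $f_{I}(\tau_{I})$ unchanged by unitary invariance. After this step, each marginal $\tau_{I}$ is diagonal in its local computational basis.

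Next, I apply the full pinching to obtain the classical state $\tau_{[n]}^{\mathrm{cl}} := \Delta(\tau_{[n]})$. Because each marginal is already diagonal, the commutation from the first paragraph gives $\tau_{I}^{\mathrm{cl}} = \Delta_{I}(\tau_{I}) = \tau_{I}$ for every $I \in \mathcal{W}$, so the value of the objective is unchanged. Since $\Delta$ is a mixed-unitary channel (it is a uniform average over diagonal Weyl-type unitaries in the chosen basis), the doubly-stochastic characterization of majorization in Eq.~\ref{eq: q stoch majorization} yields $\lambda(\tau_{[n]}^{\mathrm{cl}}) \preceq \lambda(\tau_{[n]}) \preceq p$, so feasibility is preserved. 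Together with compactness of the feasible set and continuity of the objective---which guarantee a maximizer exists---this produces a classical optimum.

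The only step requiring real care is the commutation of $\Delta$ with every partial trace $\tr_{I^{c}}$, which is a short direct calculation in the product basis and relies essentially on the partition hypothesis: the global computational basis factorizes as a tensor product over $I$ and over $I^{c}$ for each $I \in \mathcal{W}$. The more fundamental reason the partition is needed is that it lets a single product unitary simultaneously diagonalize all of the relevant marginals; for overlapping subsets this would fail and the subsequent pinching could strictly decrease some $f_{I}(\tau_{I})$, so a substantially different argument would be required.
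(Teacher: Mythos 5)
Your proof is correct and takes essentially the same route as the paper's: rotate by a product unitary $\bigotimes_{I \in \mathcal{W}} U_I$ so every marginal $\tau_I$ is diagonal (unitary invariance keeps the objective and the global spectrum fixed), then pinch in the computational basis, using the partition structure to see that pinching preserves each marginal and, being mixed-unitary, keeps the state feasible. The only difference is cosmetic---you phrase the pinching as a single global dephasing while the paper writes it as the tensor product $\bigotimes_{I \in \mathcal{W}} \mathcal{N}_{\tau,I}$, which are identical here; the paper additionally decomposes the resulting classical state into a convex combination of classical states with spectrum exactly $p$, a refinement not needed for the stated claim.
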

\begin{proof}
    Without loss of generality, let $\tau_{[n]}\in \mathcal{D} (\mathcal{H}_{[n]})$ have spectrum $p$. For each $I \in \mathcal{W}$, if necessary, a unitary quantum channel $\mathcal{U}_{I} \otimes \id_{I^{c}}$ may be applied so that $\tau_{I}$ is classical. This does not affect the objective function. For each $I \in \mathcal{W}$, let $\mathcal{N}_{\tau, I}$ denote the quantum channel that fully decoheres in a basis of $\tau_{I}$. That is, it has the pinching action $ \mathcal{N}_{\tau, I} (\cdot ) = \text{diag}(\cdot)$. It is clear that $\tau'_{[n]} = \bigotimes_{I \in \mathcal{W}} \mathcal{N}_{\tau, I} (\tau_{[n]})$ is classical, and for each $I \in \mathcal{W}$, $\tau'_{I} = \tau_{I}$. Since pinching is mixed-unitary (see chapter 4 of Ref.~\cite{Watrous2018}
    ), then $\lambda(\tau'_{[n]} ) \preceq  p$. We may optimize further by noticing that \(\tau'_{[n]}\) is a convex combination of classical states each with spectrum equal to $p$. One of them has a higher objective value. 
\end{proof}

\section{Concluding remarks}
\label{sec:conclusion}
We generalized and systematically studied spin alignment problems. We gave non-trivial examples of cases where tuples of spectrally constrained self-adjoint operators are more aligned than others. These can be considered refinements of Ky Fan's relation Eq.~\ref{Fan's majorization relation}. We conclude with a non-exhaustive list of problems in this area that we hope will help guide future research.
\begin{enumerate}
    \item (\textbf{Schatten norms of non-integer order}) We used Cor.~\ref{cor: ovelap of absolute value of trace} to resolve the spin alignment problem for Schatten norms of integer order. It is of interest to know if a similar approach can be used in cases where the order of the norm is not an integer. Suppose that $A_{0}, A_{1}, B_{0}, B_{1} \geq 0$ satisfy $\lambda(A_{0}) = \lambda(B_{0})$, $\lambda(A_{1}) = \lambda(B_{1})$. For a string $s \in \{0, 1\}^{*}$, define the ordered products $\Pi_s (A_{0}, A_{1}) := \Pi_{i \in [|s|]} A_{s(i)}$ and $\Pi_s(B_{0}, B_{1}) := \Pi_{i \in [|s|]} B_{s(i)}$. If for all strings $s$, $\tr(\Pi_s (A_{0}, A_{1})) \geq |\tr(\Pi_s (B_{0}, B_{1}))|$, does it hold that $|| A_{0} + A_{1} ||_{p} \geq || B_{0} + B_{1} ||_{p}$ for all $p \in [1,\infty)$? If so, then the overlap lemma may be used to prove the spin alignment conjecture for all Schatten norms, and by extension for the von Neumann entropy as well. 

    \item (\textbf{Algebraic approximations of Fan norms}) The Schatten norms of integer order may be said to be algebraic norms, in that they can be defined using polynomials of the absolute value of the operator in question. Since $\lim_{m \rightarrow \infty} || \cdot ||_{m} = || \cdot ||$, they approximate the Fan norm of order $1$ arbitrarily well. Clearly, the trace norm is algebraic as well. Do there exist algebraic operator norms that approximate other Fan norms? If so, is there any interesting structure to the corresponding polynomials, especially with regard to spin alignment?

    \item (\textbf{Separable refinement of Ky Fan's relation}) Given $A_{0}, A_{1}, B_{0}, B_{1} \geq 0$, does the following majorization relation
    \begin{align}
    \label{rel: sep Ky Fan}
    \lambda ( A_{0} \otimes B_{0} + A_{1} \otimes B_{1}) \preceq  \lambda ( A^{\downarrow}_{0} \otimes B^{\downarrow}_{0} + A^{\downarrow}_{1} \otimes B^{\downarrow}_{1})    
    \end{align}
    hold? In the special case where $\text{rank} (B_{0}) = \text{rank} (B_{1}) = 1$, we know the answer to be yes by way of a characterization of separable states due to Nielsen and Kempe \cite{Nielsen2001}. Specifically, they showed that separable states contain more dispersion globally than locally. If the relation Eq.~\ref{rel: sep Ky Fan} holds more generally, then it may be used to prove a separable version of the strong spin alignment conjecture.   
\end{enumerate}

\subsubsection*{Acknowledgements}
The authors thank Eli Halperin, Graeme Smith, and Mark Wilde for fruitful discussions regarding the manuscript, and Felix Leditzky and Scott Glancy for comments that helped improve the presentation of the material. The authors also thank the anonymous referee for their careful reading of the manuscript and useful comments. At the time this work was performed, M. A. Alhejji was supported as an Associate in the Professional Research
Experience Program (PREP) operated jointly by NIST
and the University of Colorado Boulder. This is a contribution of the National Institute of
Standards and Technology, not subject to U.S. copyright.

\subsubsection*{Declarations}
\textbf{Funding and/or Conflicts of interests/Competing interests}: we declare no conflict of interests. \textbf{Data availability}: the manuscript has no associated data.

\appendix
\section{The overlap lemma}
\label{sec:app A}

\begin{lem} 
\label{overlap lemma} (The overlap lemma)
Let $I_{1},\ldots,I_{\ell}$ be a family of subsets of $[n]$. For \(i \in [\ell]\), let \(Q_{I_{i}^{c}} = \bigotimes_{j \in I_{i}^{c}} Q_{I_{i}^{c},j}\), where for each \(j \in I_{i}^{c}\), \(Q_{I_{i}^{c},j} \in \mathcal{B}(\mathcal{H}_{j})\) has largest singular value \(\alpha_{i,j} > 0\) and is of the form \(Q_{I_{i}^{c},j} = \alpha_{i, j} \ket{q_1}\bra{q_{1}} \oplus 
W_{i, j}\). For any unitarily invariant norm $||| \cdot |||$, the maximization problem 
\begin{align}  \label{overlap equation}  \max_{(R_{I_{i}})_{i=1}^{\ell}}     ||| \; \;\prod_{i=1}^{\ell} R_{I_{i}} \otimes Q_{I_{i}^{c}} \; |||,
\end{align}
where the variable tuple $(R_{I_{i}})_{i=1}^{\ell}$ is of operators each with trace norm at most \(1\), has $(\ket{q_{1}}\bra{q_{1}}^{\otimes I_{i}})_{i=1}^{\ell}$ as a solution. \end{lem}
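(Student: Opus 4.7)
The plan is to reduce, via Fan's dominance theorem and a rank decomposition, to a statement about rank-$1$ $R_{I_i}$'s, and then to exploit the block-diagonal structure of each $Q_{I_i^c,j}$ with respect to the splitting $\mathcal{H}_j = \mathbb{C}|q_1\rangle \oplus |q_1\rangle^\perp$. By Fan's dominance theorem (Thm.~IV.2.2 in \cite{Bhatia1997}), it suffices to show that for every Fan-$k$ norm,
\[
\Bigl\|\prod_{i=1}^\ell R_{I_i}\otimes Q_{I_i^c}\Bigr\|_{(k)} \;\leq\; \Bigl\|\prod_{i=1}^\ell |q_1\rangle\langle q_1|^{\otimes I_i}\otimes Q_{I_i^c}\Bigr\|_{(k)}.
\]

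To reduce to rank-$1$ $R_{I_i}$'s, I would expand each $R_{I_i} = \sum_{m} \sigma_{i,m}|u_{i,m}\rangle\langle v_{i,m}|_{I_i}$ in its singular value decomposition (so $\sum_m \sigma_{i,m} = \tr|R_{I_i}| \leq 1$ and $\{|u_{i,m}\rangle\}_m, \{|v_{i,m}\rangle\}_m$ are orthonormal in $\mathcal{H}_{I_i}$), multilinearly expand the product, and apply the triangle inequality for $\|\cdot\|_{(k)}$:
\[
\Bigl\|\prod_{i=1}^\ell R_{I_i} \otimes Q_{I_i^c}\Bigr\|_{(k)} \leq \sum_{m_1,\ldots,m_\ell} \Bigl(\prod_{i=1}^\ell \sigma_{i,m_i}\Bigr) \Bigl\|\prod_{i=1}^\ell |u_{i,m_i}\rangle\langle v_{i,m_i}|_{I_i} \otimes Q_{I_i^c}\Bigr\|_{(k)}.
\]
If the bound is established for every rank-$1$ tuple of the form $R_{I_i} = |\alpha_i\rangle\langle\beta_i|_{I_i}$ with unit vectors $|\alpha_i\rangle, |\beta_i\rangle$, then the trace-norm constraint $\prod_i(\sum_{m_i}\sigma_{i,m_i}) \leq 1$ yields the general statement.

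The substantive step is therefore the rank-$1$ case. The key structural fact is that the candidate $\prod_{i=1}^\ell |q_1\rangle\langle q_1|^{\otimes I_i}\otimes Q_{I_i^c}$ factorizes as a pure tensor $\bigotimes_{j=1}^n B_j$: every factor is already a site-wise tensor product, and on any site $j \in \bigcup_i I_i$ the identities $|q_1\rangle\langle q_1|\cdot Q_{I_i^c,j} = Q_{I_i^c,j}\cdot|q_1\rangle\langle q_1| = \alpha_{i,j}|q_1\rangle\langle q_1|$ collapse $B_j$ to the rank-$1$ scaled projection $\bigl(\prod_{i':j\in I_{i'}^c}\alpha_{i',j}\bigr)|q_1\rangle\langle q_1|$, while on $j \in \bigcap_i I_i^c$, $B_j = \prod_i Q_{I_i^c,j}$ remains block-diagonal with top eigenvalue $\prod_i \alpha_{i,j}$. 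For a general rank-$1$ choice I would decompose $|\alpha_i\rangle = a_i|q_1\rangle^{\otimes I_i} + |\alpha_i^\perp\rangle$ and $|\beta_i\rangle = b_i|q_1\rangle^{\otimes I_i} + |\beta_i^\perp\rangle$, expand the resulting product, and bound the fully aligned summand (which reproduces a scalar multiple of modulus $\leq 1$ of the candidate) and each misaligned summand separately using the submultiplicativity $\|XY\|_{(k)} \leq \|X\|_\infty \|Y\|_{(k)}$ together with $\|W_{i,j}\|_\infty \leq \alpha_{i,j}$. The hard part will be aggregating the $4^\ell$ summands so that the total bound does not exceed $\|\bigotimes_j B_j\|_{(k)}$; the estimate $\|W_{i,j}\|_\infty \leq \alpha_{i,j}$ is the linchpin, since it prevents any perpendicular block from contributing a singular value that beats the coherent $\alpha_{i,j}$-block captured by the candidate.
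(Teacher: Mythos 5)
Your reduction to rank-one $R_{I_i}$'s via singular value decomposition, multilinearity, and the triangle inequality is fine, and the observation that the candidate factorizes site-wise is a useful way to describe the target. The fatal flaw is in the rank-one case. You plan to decompose $|\alpha_i\rangle = a_i|q_1\rangle^{\otimes I_i} + |\alpha_i^\perp\rangle$, $|\beta_i\rangle = b_i|q_1\rangle^{\otimes I_i}+|\beta_i^\perp\rangle$, expand the product into $4^\ell$ summands, and ``bound each summand separately.'' That second application of the triangle inequality is irreparably lossy: the summands are not small enough individually, and their constructive interference inside the product is exactly what makes the bound tight.

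Here is a concrete instance showing the gap. Take $\ell=2$, $n=1$, $I_1=\{1\}$, $I_2=\emptyset$, so the product is $R_1 Q$ with $Q=\alpha|q_1\rangle\langle q_1|\oplus W$, $\|W\|\le\alpha$, and the candidate's trace norm is $\alpha$. Pick $R_1=|\alpha_1\rangle\langle\beta_1|$ with $a_1=b_1=1/\sqrt2$, $\||\alpha_1^\perp\rangle\|=\||\beta_1^\perp\rangle\|=1/\sqrt2$, and $W=\alpha\cdot\mathds{1}$ on $|q_1\rangle^\perp$. The four summands then each have trace norm $\alpha/2$, so the triangle inequality gives the useless estimate $\|R_1Q\|_1\le 2\alpha$; a direct computation gives $\|R_1Q\|_1=\|Q^*|\beta_1\rangle\|=\alpha$. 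The slack only grows with $\ell$. You flag the aggregation as ``the hard part,'' but it is not merely hard: with this decomposition there is no aggregation that can close a constant-factor gap, because the bound $\|W_{i,j}\|\le\alpha_{i,j}$ alone does not prevent every off-diagonal term from saturating at comparable size.

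What actually makes the bound tight is that one must track the product structure without first splitting it into a sum. The paper does this by induction on $\ell$: it merges the first two factors into a single operator $\tilde R_{I_1\cup I_2}\otimes\tilde Q_{(I_1\cup I_2)^c}$ and proves a separate lemma (Lem.~\ref{lem: inductive step for the overlap lemma proof}) showing that the merged $\tilde R_{I_1\cup I_2}=Q_{I_1^c\cap I_2}R_{I_1}R_{I_2}Q_{I_2^c\cap I_1}$ still has trace norm at most $1$, using a Schmidt-decomposition argument and Schur concavity of $\tr\sqrt{\cdot}$. This keeps the invariant ``trace norm at most one'' alive through the induction and never pays the price of a multi-term expansion. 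If you want a route closer in spirit to yours, you would need to replace the $4^\ell$-term triangle inequality with a single inequality of the form ``$\|R Q\|_1\le\|Q\|_\infty\|R\|_1$'' applied coherently across sites, which is essentially what the inductive lemma does.
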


We argue by induction on the number of sets \(\ell\). For that, we need the following statement. 
\begin{lem}
\label{lem: inductive step for the overlap lemma proof}
Let \(\cK_1, \cK_2,\) and \(\cK_3\) be given complex inner product spaces. Let \(A_{1} \in \cB(\cK_1)\) and \(A_{3} \in \cB(\cK_3)\) be such that \( ||A_{1}|| \leq 1\) and \(||A_{3}|| \leq 1\). For all \(T_{2,3} \in \cB(\cK_2 \otimes \cK_3)\)  and \(S_{1,2} \in \cB(\cK_1 \otimes \cK_2)\) of trace norm at most \(1\), it holds that \(
    || (A_{1} \otimes T_{2,3}) (S_{1,2} \otimes A_{3})||_{1} \leq 1\).
\end{lem}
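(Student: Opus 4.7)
The plan is to use the trace-norm submultiplicativity $\|XMY\|_{1} \leq \|X\|_{\infty}\|M\|_{1}\|Y\|_{\infty}$ to strip off the $A_{1}$ and $A_{3}$ factors, reduce the remaining bound to the case where $T_{2,3}$ and $S_{1,2}$ are rank one via the singular value decomposition, and finally handle that rank-one case by an explicit factorization routed through $\mathcal{K}_{1}$.

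First, since $A_{1} \otimes I_{23}$ commutes with $I_{1} \otimes T_{2,3}$ and $S_{1,2} \otimes I_{3}$ commutes with $I_{12} \otimes A_{3}$, one can write
\[
(A_{1} \otimes T_{2,3})(S_{1,2} \otimes A_{3}) = (A_{1} \otimes I_{23})\, L \,(I_{12} \otimes A_{3}),
\qquad L := (I_{1} \otimes T_{2,3})(S_{1,2} \otimes I_{3}).
\]
Because the operator norm tensorizes and $\|A_{1}\|, \|A_{3}\| \leq 1$, it suffices to show $\|L\|_{1} \leq 1$. Next, expand $T_{2,3} = \sum_{j} \tau_{j}\ket{u_{j}}\bra{v_{j}}$ and $S_{1,2} = \sum_{i} \sigma_{i}\ket{p_{i}}\bra{q_{i}}$ as singular value decompositions, with $\sum_{j}\tau_{j} = \|T_{2,3}\|_{1} \leq 1$ and $\sum_{i}\sigma_{i} = \|S_{1,2}\|_{1} \leq 1$. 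The triangle inequality then reduces the task to proving $\|M\|_{1} \leq 1$, where $M := (I_{1} \otimes \ket{u}\bra{v})(\ket{p}\bra{q} \otimes I_{3})$ for arbitrary unit vectors $\ket{u}, \ket{v}\in\mathcal{K}_{2}\otimes\mathcal{K}_{3}$ and $\ket{p}, \ket{q}\in\mathcal{K}_{1}\otimes\mathcal{K}_{2}$.

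For this final step, employ the vector-to-matrix correspondence in which $\ket{p}_{12} = \sum_{a,b} P_{ab}\ket{ab}$ corresponds to a matrix $P:\mathcal{K}_{2} \to \mathcal{K}_{1}$ with Frobenius norm $\|P\|_{2} = \|\ket{p}\| = 1$, and analogously $\ket{v}_{23}$ corresponds to $V:\mathcal{K}_{3}\to\mathcal{K}_{2}$ with $\|V\|_{2} = 1$. A direct computation on basis vectors $\ket{abc}$ yields the factorization
\[
M \;=\; (I_{1} \otimes \ket{u}_{23}) \cdot (P\bar V) \cdot (\bra{q}_{12} \otimes I_{3}),
\]
where the outer factors are contractions in operator norm because $\ket{u}$ and $\ket{q}$ are unit vectors. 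Submultiplicativity applied once more gives $\|M\|_{1} \leq \|P\bar V\|_{1}$, and H\"older's inequality $\|P\bar V\|_{1} \leq \|P\|_{2}\|\bar V\|_{2} = 1$ finishes the bound.

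The main obstacle is spotting the correct factorization in the rank-one step: one must notice that the image of $M$ lies inside the $d_{\mathcal{K}_{1}}$-dimensional subspace $\mathcal{K}_{1}\otimes\mathrm{span}\{\ket{u}\}$, which permits routing the action through $\mathcal{K}_{1}$ via the $d_{\mathcal{K}_{1}} \times d_{\mathcal{K}_{3}}$ intermediate matrix $P\bar V$; once that structure is recognized, all remaining bounds are routine Schatten-norm inequalities.
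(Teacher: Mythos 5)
Your proof is correct, and it reaches the key estimate by a genuinely different route than the paper's. You first strip off $A_1$ and $A_3$ via submultiplicativity of the trace norm (using that $\|A_1 \otimes I\|_\infty, \|I\otimes A_3\|_\infty \le 1$), then reduce to rank-one $S_{1,2},T_{2,3}$ by convexity, then bound the remaining quantity using the vector--matrix (reshaping) isomorphism together with H\"older's inequality $\|P\bar V\|_1 \le \|P\|_2\|\bar V\|_2$. The paper instead reduces to rank one first, absorbs $A_1$ and $A_3^*$ into the vectors (producing subnormalized $\ket{v}_{1,2}, \ket{u}_{2,3}$), and then bounds $\|\bra{u}_{2,3}\ket{v}_{1,2}\|_1$ by an explicit Schmidt-decomposition computation followed by Schur--Horn majorization, Schur-concavity of $\tr\sqrt{\cdot}$, and a final Cauchy--Schwarz. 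Both arguments hinge on the same inner contraction operator from $\mathcal{K}_3$ to $\mathcal{K}_1$ (your $P\bar V$ is exactly the paper's $\bra{u}_{2,3}\ket{v}_{1,2}$), but your single application of the Schatten--H\"older inequality replaces the paper's three-step majorization argument, which makes your version shorter and arguably more transparent. The factorization $M = (I_1\otimes\ket{u}_{23})(P\bar V)(\bra{q}_{12}\otimes I_3)$ that you leave as a ``direct computation'' does check out, including the complex conjugate on $V$ (it enters because $\bra{v}$, not $\ket{v}$, appears in $M$), and the outer factors indeed have operator norm $1$ since $\ket{u}$ and $\ket{q}$ are unit vectors.
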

\begin{proof}
Because \(|| \cdot ||_1\) is homogeneous and convex, it suffices to prove the inequality holds in the cases where \(S_{1,2} = \ket{s}\bra{\tilde{s}}_{1,2}\) and \(T_{2,3} = \ket{t}\bra{\tilde{t}}_{2,3}\) for arbitrary unit vectors \(\ket{s}, \ket{\tilde{s}} \in \cK_1 \otimes \cK_2\) and \(\ket{t}, \ket{\tilde{t}} \in \cK_{2}\otimes \cK_3\). Since \(||A_{1}|| \leq 1\) and \(||A_{3}|| \leq 1\), the norms of \(\ket{v}_{1,2} = A_{1} \otimes \one_2 \ket{s}_{1,2}\) and \(\ket{u}_{2,3} = \one_2 \otimes A_{3}^{*} \ket{\tilde{t}}_{2,3}\) are bounded from above by \(1\). Hence, it suffices to prove that 
\begin{align}
    || (\one_1 \otimes \ket{t}\bra{u}_{2,3}) (\ket{v}\bra{\tilde{s}}_{1,2} \otimes \one_3||_1 \leq 1
\end{align}
for arbitrary unit vectors \(\ket{v}, \ket{\tilde{s}} \in \cK_1 \otimes \cK_2\) and \(\ket{t}, \ket{u} \in \cK_{2}\otimes \cK_3\). 

Consider the inner factor operator \(\bra{u}_{2,3} \ket{v}_{1,2} : \cK_3 \rightarrow \cK_1\). To see that its trace norm is at most 1, let \(\ket{v}_{1,2} = \sum_{i} \sqrt{\lambda_i} \ket{\alpha_{i}}_{1} \otimes \ket{\beta_{i}}_{2}\) and \(\ket{u}_{2,3} = \sum_{j} \sqrt{\mu_j} \ket{\gamma_{j}}_{2} \otimes \ket{\delta_{j}}_{3}\) be Schmidt decompositions. Then,
\begin{align}
    \bra{u}_{2,3} \ket{v}_{1,2} = \sum_{i, j} \sqrt{\mu_j} \sqrt{\lambda_{i}} \braket{\gamma_{j}|\beta_{i}} \ket{\alpha_{i}}_{1}\bra{\delta_{j}}_{3}
\end{align}
and 
\begin{align}
    |\bra{u}_{2,3} \ket{v}_{1,2}|^{2} = \sum_{i, j, j'} \sqrt{\mu_{j} \mu_{j'}} \lambda_{i} \braket{\gamma_{j}|\beta_{i}} \braket{\beta_{i}|\gamma_{j'}} \ket{\delta_{j'}}\bra{\delta_{j}}_{3}.
\end{align}
Denote \( b_j = \bra{\gamma_j} (\sum_{i} \lambda_{i} \ket{\beta_i}\bra{\beta_{i}}) \ket{\gamma_{j}} \) and notice that \(\sum_{j} b_{j} \leq 1\). It follows from the Schur-concavity of \(\tr(\sqrt{\cdot})\) and the Schur-Horn theorem that 
\begin{align}
    || \bra{u}_{2,3} \ket{v}_{1,2} ||_{1} &= \tr ( \sqrt{|\bra{u}_{2,3} \ket{v}_{1,2}|^{2}}) \\
    &\leq \tr (\sqrt{\text{diag}(|\bra{u}_{2,3} \ket{v}_{1,2}|^{2})}) \\
    &= \sum_{j} \sqrt{\mu_j b_j} \leq 1.
\end{align}
Finally, \(
||  \bra{u}_{2,3} \ket{v}_{1,2} \otimes \ket{t}_{2,3}\bra{\tilde{s}}_{1,2} ||_1 = || \bra{u}_{2,3} \ket{v}_{1,2} ||_1 \; ||\ket{t}_{2,3}\bra{\tilde{s}}_{1,2} ||_1 \leq 1. \) \end{proof}

\begin{proof} [Proof of Lem.~\ref{overlap lemma}]
  Since \(||| \cdot |||\) is homogeneous, it may
  be assumed without loss of generality for each \(i \in [\ell]\) that \(||Q_{I_{i}^{c}, j} || = 1\) for all \(j \in I_{i}^{c}\). We proceed via induction on \(\ell\). If \(\ell = 1\), the statement of the lemma follows from the facts that \(|| \ket{q_1}\bra{q_1}^{\otimes I_1}||_1 = 1\) and that \(||| \cdot |||\) is convex and unitarily invariant. Suppose that the statement holds in cases where the family of sets has \(\ell > 1\) elements. Consider \(\prod_{i=1}^{\ell +1} R_{I_{i}} \otimes Q_{I_{i}^{c}}\) and notice that
    \begin{align}
    \label{eq: two operator product}
         R_{I_{1}} \otimes Q_{I_{1}^{c}} R_{I_{2}} \otimes Q_{I_{2}^{c}} = \underbrace{Q_{I_{1}^{c} \cap I_2} R_{I_{1}} R_{I_{2}} Q_{I_{2}^{c} \cap I_{1}} }_{ \tilde{R}_{ I_{1} \cup I_{2} }} \otimes  \; \underbrace{Q_{I_{1}^{c} \setminus I_2}  Q_{I_{2}^{c} \setminus I_1}}_{\tilde{Q}_{ (I_{1} \cup I_{2})^{c}}}.
    \end{align}
    The operator \(\tilde{Q}_{ (I_{1} \cup I_{2})^{c}}\) is completely factorizable on \(I_{1} \cup I_{2}\) and has maximal singular value \(1\). Moreover, each of its factors can be written as \(\ket{q_{1}}\bra{q_1} \oplus W\) for some operator \(W\). By Lem.~\ref{lem: inductive step for the overlap lemma proof}, \(|| \tilde{R}_{ I_{1} \cup I_{2}} ||_{1} \leq 1\) and so we may estimate
    \begin{align}
       ||| \prod_{i=1}^{\ell +1} R_{I_{i}} \otimes Q_{I_{i}^{c}} ||| &= |||  \tilde{R}_{ I_{1} \cup I_{2} } \otimes  \; \tilde{Q}_{ (I_{1} \cup I_{2})^{c}} \prod_{i=3}^{\ell + 1} R_{I_{i}} \otimes Q_{I_{i}^{c}}||| \\
        &\leq ||| \ket{q_{1}}\bra{q_1}^{\otimes I_{1} \cup I_{2}} \otimes \; \tilde{Q}_{ (I_{1} \cup I_{2})^{c}} \prod_{i=3}^{\ell + 1} \ket{q_1}\bra{q_1}^{\otimes I_{i}} \otimes Q_{I_{i}^{c}} ||| \\
        &=  ||| \prod_{i=1}^{\ell +1} \ket{q_{1}}\bra{q_{1}}^{\otimes I_{i}} \otimes Q_{I_{i}^{c}} |||, 
    \end{align}
    where the inequality is by the induction hypothesis.  \end{proof}

\section{On the relationship between the sum of two projectors and their product}
\label{sec:app B}

In this appendix, we prove lemmas necessary to elucidate the relationship between a non-negative linear combination $s_{1} P_{1} + s_{2} P_{2}$ of two projectors $P_{1}, P_{2} 
\in \mathcal{S}(\mathcal{K})$ and their product $P_{1} P_{2}$. By \cite{Jordan1875}, $\mathcal{K}$ may be decomposed into a direct sum of subspaces, each of dimension at most $2$, that are invariant under the action of both $P_{1}$ and $P_{2}$. Moreover, when restricted to each invariant subspace, the two projectors have rank at most 1. So, we may write $\mathcal{K}$ as an orthogonal direct sum of one- and two-dimensional minimal invariant subspaces
\begin{align}
\mathcal{K} = \bigoplus_{i_{1} = 1}^{m_{1}} 
\mathcal{K}_{i_{1}}^{(1)} \oplus \bigoplus_{i_{2} = 1}^{m_{2}} \mathcal{K}_{i_{2}}^{(2)}, 
\end{align}
where the $\mathcal{K}_{i}^{(1)}$ are one-dimensional and the $\mathcal{K}_{i}^{(2)}$ are two dimensional. By minimal, we mean that the subspaces contain no proper nonzero invariant subspace.
For each $i_{2} \in [m_{2}]$, we notate
\begin{align}
    P_{1}|_{\mathcal{K}_{i_{2}}^{(2)}} =: \ket{\alpha_{i_{2}}}\bra{\alpha_{i_{2}}} \; , \;    P_{2}|_{\mathcal{K}_{i_{2}}^{(2)}} =: \ket{\beta_{i_{2}}}\bra{\beta_{i_{2}}}.
\end{align}
$P_{1}$ and $P_{2}$ commute if and only if $m_{2} = 0$. The difficulty in reasoning about the eigenvalues of a linear combination of two projectors lies in these subspaces where they do not commute.

When restricted to $\bigoplus_{i_{2} = 1}^{m_{2}} \mathcal{K}_{i_{2}}^{(2)}$, the nonzero singular values of $P_{1} P_{2}$ are $(| \braket{\alpha_{i_{2}}|\beta_{i_{2}}}|)_{i_{2} = 1}^{m_{2}}$. The eigenvalues of the restriction $(s_{1} P_{1} + s_{2} P_{2}) |_{\mathcal{K}_{i_{2}}^{(2)}}$ may be computed as
\begin{align}
\label{eq: eigenvalue equation for two rank1 projector sum}
\frac{1}{2} ( (s_{1} + s_{2}) \pm \sqrt{(s_{1} - s_{2})^{2} + 4 s_{1} s_{2} | \braket{\alpha_{i_{2}}|\beta_{i_{2}}}|^{2}}).
\end{align}
Hence, the eigenvalues of $s_{1} P_{1} + s_{2} P_{2}$ are a function of the singular values of the product $P_{1} P_2$. We show next that it is in fact a \textit{strictly isotone} function (see page 41 of Ref.~\cite{Bhatia1997}). A strictly isotone function $G$ is one that preserves the majorization ordering in the sense that if $v\succeq w$, then $G(v)\succeq G(w)$.  That is, the less dispersed the singular values of $P_{1} P_{2}$, the less dispersed the eigenvalues of $s_{1} P_{1} + s_{2} P_{2}$. This is a consequence of the fact that for $a , b \geq 0$, the map $x \mapsto \sqrt{a + b x^{2}}$ is convex on $\mathbb{R}_{\geq 0}$. 
\begin{lem} \label{lemma: isotone}
Let $A \subseteq \mathbb{R}$ be convex and $g: A \rightarrow \mathbb{R}_{\geq 0}$ be a convex function. For $t \in \mathbb{R}$, define the mapping $G: A^{m} \rightarrow \mathbb{R}^{2 m}$ with action
\begin{align}
(v_{1},\ldots, v_{m}) \mapsto (t + g(v_{1}), \ldots, t + g(v_{m}) ) \oplus (t - g(v_{1}), \ldots, t - g(v_{m}) ).
\end{align}
Then, $G$ is strictly isotone. \end{lem}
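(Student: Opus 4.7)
My plan is to reduce the claim to a statement about convex one-variable functions via the Hardy--Littlewood--P\'olya (HLP) characterization of majorization: $v \succeq w$ in $\mathbb{R}^m$ if and only if $\sum_k \phi(v_k) \geq \sum_k \phi(w_k)$ for every convex $\phi:\mathbb{R}\to\mathbb{R}$. Applied to $G(v)$ and $G(w)$ in $\mathbb{R}^{2m}$, the desired inequality
\begin{equation*}
\sum_{k=1}^{2m} \phi\bigl(G(v)_k\bigr) \;\geq\; \sum_{k=1}^{2m} \phi\bigl(G(w)_k\bigr)
\end{equation*}
rewrites as $\sum_{k=1}^m h(v_k) \geq \sum_{k=1}^m h(w_k)$, where
\begin{equation*}
h(x) \;:=\; \phi\bigl(t + g(x)\bigr) + \phi\bigl(t - g(x)\bigr).
\end{equation*}
So it suffices to prove that $h$ is convex on $A$ for every convex $\phi$, after which a second application of HLP (in the forward direction, using $v \succeq w$) yields the conclusion.

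To show $h$ is convex, I would introduce the auxiliary map $H(y) := \phi(t + y) + \phi(t - y)$ and make two observations: $H$ is convex, since both summands are compositions of $\phi$ with an affine map; and $H$ is even, i.e.\ $H(-y) = H(y)$. A convex even function is non-decreasing on $[0,\infty)$: for $0 \leq y_1 \leq y_2$ one has $y_1 = \alpha y_2 + (1-\alpha)(-y_2)$ with $\alpha \in [1/2, 1]$, so $H(y_1) \leq \alpha H(y_2) + (1-\alpha) H(-y_2) = H(y_2)$. Since $g$ takes values in $[0, \infty)$, the function $h(x) = H(g(x))$ is the composition of a convex function non-decreasing on the range of $g$ with the convex function $g$, and is therefore convex by the standard composition rule.

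The key hypothesis that does the work is $g \geq 0$; without it, $H \circ g$ need not be convex, since $H$ is decreasing on $(-\infty, 0]$ and the composition of $H$ with a convex $g$ taking negative values could fail the convexity test. I do not anticipate a serious obstacle beyond making this point explicit: the argument is essentially two invocations of HLP bridged by the convexity of $h$, and the evenness-plus-convexity trick for $H$ is the one nontrivial ingredient. If one wished to avoid citing HLP twice, an alternative route would be to use Prop.~\ref{prop: T-transforms and unjust transfers are equivalent} to reduce to the case where $v$ and $w$ differ on only two coordinates via a single $T$-transform, and then verify the four-element majorization $\{t \pm g(v_1), t \pm g(v_2)\} \succeq \{t \pm g(w_1), t \pm g(w_2)\}$ directly using $g(w_1) + g(w_2) \leq g(v_1) + g(v_2)$ from convexity of $g$; but the HLP route is cleaner and avoids case analysis on the signs.
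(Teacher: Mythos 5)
Your proof is correct, but it takes a genuinely different route from the paper's. You reduce the majorization $G(v) \succeq G(w)$ to an inequality $\sum_i h(v_i) \geq \sum_i h(w_i)$ via the Hardy--Littlewood--P\'olya characterization (noting that the sums $\sum_k G(v)_k = 2mt = \sum_k G(w)_k$ agree automatically), then prove $h(x) = \phi(t+g(x)) + \phi(t-g(x))$ is convex by observing that $H(y) := \phi(t+y)+\phi(t-y)$ is an even convex function, hence non-decreasing on $[0,\infty)$, and composing with the nonnegative convex $g$. The paper instead verifies the Fan-norm partial-sum conditions for $G(v) \succeq G(w)$ directly: because $g \geq 0$, the top $m$ entries of $G(v)$ are exactly $t+g(v_i)$ and the bottom $m$ are $t-g(v_i)$, so the order-$k$ partial sums take the clean form $kt + \sum_{j \leq \min(k, 2m-k)} g(v)_j^\downarrow$; the inequality then follows from the weak majorization $g(v) \succeq_w g(w)$, which is what the convexity of $g$ together with the doubly-stochastic characterization buys (Theorem II.3.3 in Bhatia). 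Both arguments hinge on the same two facts --- $g \geq 0$ (so that $t+g \geq t-g$ entrywise, or equivalently so that $H\circ g$ is convex) and the convexity of $g$ --- but the paper's version is more elementary, working only with partial sums, while yours is slightly more conceptual at the cost of invoking HLP in both directions and supplying the evenness trick. Your alternative route via $T$-transforms, sketched at the end, would also work and is closer in spirit to the paper's hands-on style.
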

\begin{proof}
Let $v, w \in A^{m}$ be such that $v \succeq w$. Observe that for $k \in [m]$, 
\begin{align}
\sum_{j=1}^{k} G(v)^{\downarrow}_{j} = k t + \sum_{j=1}^{k} g(v)_{j}^{\downarrow} \geq k t + \sum_{j=1}^{k} g(w)_{j}^{\downarrow} = \sum_{j=1}^{k} G(w)^{\downarrow}_{j},
\end{align}
where the inequality follows from the convexity of $g$ and the doubly-stochastic characterization of majorization (see, for example, Theorem II.3.3 on page 41 of Ref.~\cite{Bhatia1997}).  If $k > m$, then 
\begin{align}
\sum_{j=1}^{k} G(v)^{\downarrow}_{j} &= k t + \sum_{j=1}^{m} g(v)_{j}^{\downarrow} - \sum_{j=1}^{k - m} g(v)_{j}^{\uparrow} = k t + \sum_{j=1}^{2m - k} g(v)_{j}^{\downarrow}\\
&\geq k t + \sum_{j=1}^{2m - k} g(w)_{j}^{\downarrow} = \sum_{j=1}^{k} G(w)^{\downarrow}_{j}.
\end{align}
Since $\sum_{j=1}^{2 m} G(\cdot)_{j} = 2 m t$, $G(v) \succeq G(w)$. \end{proof}

The following lemma exhibits the maximal elements in the majorization ordering in a superset of the possible tuples of nonzero singular values of $P_{1} P_{2}|_{\bigoplus_{i_{2} = 1}^{m_{2}} \mathcal{K}_{i_{2}}^{(2)}}$. 
\begin{lem} \label{lemma: majorants}
Given $m \in \mathbb{N}, e \geq 0$, define the set
\begin{align}
    S_{e} := \{ x \in \mathbb{R}^{m} \; | \; \forall i \in [m], x_{i} \in [0,1], \sum_{i=1}^{m} x_{i} = e\}.
\end{align}
The element $(\underbrace{1, \ldots, 1}_{\lfloor e \rfloor\; \text{times}}, e - \lfloor e \rfloor, 0, \ldots,0)$ is a majorant of $S_{e}$. 
\end{lem}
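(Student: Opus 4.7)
The plan is to verify the definition of majorization directly, by computing partial sums of the candidate majorant and comparing with the partial sums of an arbitrary element of $S_{e}$. Since majorization is defined via the non-increasing rearrangement, I would first reduce to the case where the challenger $x \in S_{e}$ is already sorted non-increasingly, which loses no generality because $S_{e}$ is symmetric under permutations.

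Let $y = (\underbrace{1,\ldots,1}_{\lfloor e\rfloor}, e - \lfloor e\rfloor, 0,\ldots,0)$, so that $y = y^{\downarrow}$ already. The equality of total sums $\sum_{i=1}^{m} y_{i} = e = \sum_{i=1}^{m} x_{i}$ is automatic from the definition of $S_{e}$. So what remains is to check $\sum_{i=1}^{k} y_{i} \geq \sum_{i=1}^{k} x_{i}$ for each $k \in [m-1]$, where $x$ is assumed sorted.

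I would handle this by a three-part case split on $k$. First, for $1 \leq k \leq \lfloor e\rfloor$, one has $\sum_{i=1}^{k} y_{i} = k$, and since every entry of $x$ lies in $[0,1]$ one gets $\sum_{i=1}^{k} x_{i} \leq k$. Second, for $k = \lfloor e\rfloor + 1$ (relevant when $e \notin \mathbb{N}$), the partial sum on the $y$ side is exactly $e$, and since $x$ has non-negative entries summing to $e$ the partial sum on the $x$ side is at most $e$. Third, for $k > \lfloor e\rfloor + 1$ (or $k > \lfloor e \rfloor$ when $e \in \mathbb{N}$), the partial sum on the $y$ side is already $e$ and dominates $\sum_{i=1}^{k} x_{i} \leq e$ trivially. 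Combining the three cases yields $y \succeq x$.

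There is essentially no hard step here; the argument is a case-by-case check against the two constraints defining $S_{e}$ (entrywise bound by $1$ and total sum $e$). The only mild subtlety is to remember that the case $e \in \mathbb{N}$ collapses the middle case into the boundary, but this causes no issue since the inequality in the middle case reduces to equality when $e - \lfloor e\rfloor = 0$.
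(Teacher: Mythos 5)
Your proof is correct and takes essentially the same approach as the paper: the paper also verifies the partial-sum inequalities directly, using the entrywise bound $x_i \le 1$ for $k \le \lfloor e\rfloor$ and the non-negativity plus total-sum constraint for $k > \lfloor e\rfloor$. Your three-way case split merely separates $k = \lfloor e\rfloor + 1$ from $k > \lfloor e\rfloor + 1$, which the paper handles in a single case.
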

\begin{proof}
Let $x \in S_{e}$ be arbitrary. For $k \in [m]$, $k \leq \lfloor e \rfloor$, observe that 
$\sum_{i = 1}^{k} x^{\downarrow}_{i} \leq \sum_{i=1}^{k} 1 = k$. And for $k > \lfloor e \rfloor$, $\sum_{i = 1}^{k} x^{\downarrow}_{i} \leq \sum_{i = 1}^{m} x^{\downarrow}_{i} = e$. 
\end{proof}

\section{Proof of the conjecture in Ex.~\ref{ex: toy example}}
\label{sec:app C}
Let $\sum_{i = 1}^{d-1} \lambda_i(\tau) \ket{\tau_i}\bra{\tau_i}$ be a spectral decomposition of $\tau$. Let arbitrary $p \in [0, 1]$ be given and consider the convex mixture $\tau_{\gamma} := p \tau + (1-p) \ket{v_{\gamma}}\bra{v_{\gamma}}$. We wish to prove that $\tau_\gamma \preceq \tau_1$. Define
\begin{align}
    w :=  (\sqrt{p(1-p)\lambda_1 (\tau)} \braket{\tau_1 | \alpha},\ldots, \sqrt{p(1-p)\lambda_{d - 1} (\tau) } \braket{\tau_{d-1} | \alpha})^T. 
\end{align}
The Gram matrix for the \(d\) vectors \(\sqrt{p\lambda_{1}}\ket{\tau_{1}},
  \ldots \sqrt{p\lambda_{d-1}}\ket{\tau_{d-1}}, \sqrt{1-p}\ket{v_{\gamma}}\) is
\begin{align}
\label{eq: Gram matrix form}
M_{\gamma} :=  
\begin{pmatrix} 
  \begin{matrix}
    p \lambda_{1} (\tau) & \cdots & 0 \\
    \vdots& \ddots & \vdots\\
    0 & \cdots & p \lambda_{d - 1} (\tau)
  \end{matrix} \;
  \vline
  &  \begin{matrix}
          \sqrt{\gamma} w
         \end{matrix} \\
\hline
 \begin{matrix}
          \quad \quad \quad  \quad \sqrt{\gamma} w^*
         \end{matrix}  \hfill
    \vline
    &
    (1 - p)
\end{pmatrix}
\end{align}
It is not difficult to show that (see \cite{Fannes_2012} for example), up to zeros, the spectrum of $M_{\gamma}$ is equal to the spectrum of $\tau_{\gamma}$. Hence, it suffices to show that $M_{\gamma} \preceq M_{1}$ for all $\gamma \in [0,1]$.
This statement is proven
  in the next paragraph.

Let \(\ket{e_1}, \ldots, \ket{e_d}\) denote the orthonormal basis (ordered in the obvious way) used to write down the Gram matrices in Eq.~\ref{eq: Gram matrix form}. Consider the unitary 
\begin{align}
    V := (\sum_{i = 1}^{d-1} \ket{e_i}\bra{e_i}) - \ket{e_d}\bra{e_d}. 
\end{align}
Supposing it occurs with a probability \(q \in [0,1]\), we have
\begin{align}
    (1-q) M_1 + q V M_1 V^* = \begin{pmatrix} 
  \begin{matrix}
    p \lambda_{1} (\tau) & \cdots & 0 \\
    \vdots& \ddots & \vdots\\
    0 & \cdots & p \lambda_{d - 1} (\tau)
  \end{matrix} \;
  \vline
  &  \begin{matrix}
          (1-2q) w
         \end{matrix} \\
\hline
 \begin{matrix}
          \quad  \quad (1-2q) w^*
         \end{matrix}  \hfill
    \vline
    &
    (1 - p)
\end{pmatrix}
\end{align}
If \(q = \frac{1 - \sqrt{\gamma}}{2}\), then \((1-q) M_1 + q V M_1 V^* = M_{\gamma}\). Hence, for all \(\gamma \in [0,1]\), there exists a mixed-unitary channel that takes \(M_1\) to \(M_\gamma\).

\bibliographystyle{amsplain}
\bibliography{references}
\end{document}